\theoremstyle{plain}
\newtheorem{theorem}{Theorem}
\newtheorem{lemma}[theorem]{Lemma}
\newtheorem{proposition}[theorem]{Proposition}
\theoremstyle{definition}
\newtheorem{definition}[theorem]{Definition}
\newcommand{\Two}{\ensuremath{2}}
\newcommand{\CPs}{\ensuremath{\mathrm{CP}^*}}
\newcommand{\cat}[1]{\ensuremath{\mathbf{#1}}}
\newcommand\id[1][]{\ensuremath{\mathrm{id}_{#1}}}
\newcommand{\C}{\ensuremath{\mathbb{C}}}
\newcommand{\bra}[1]{\ensuremath{\left\langle{#1}\right\vert}}
\newcommand{\ket}[1]{\ensuremath{\left\vert{#1}\right\rangle}}
\renewcommand{\-}[0]{\nobreakdash-\hspace{0pt}}
\newcommand\vc[1]{\begin{tabular}{@{}l}#1\end{tabular}}
\newcommand\ignore[1]{}
\newcommand\xto[1]{\ensuremath{\smash{{}\stackrel{\smash{#1}}{\to}{}}}}
\tikzset{every picture/.style={thick}}
\title{Mixed quantum states in higher categories}
\author{
   Chris Heunen \institute{Department of Computer Science,\\University of Oxford} \email{chris.heunen@cs.ox.ac.uk} \thanks{Supported by the Engineering and Physical Sciences Research Council Fellowship EP/L002388/1. \newline We thank Aleks Kissinger for useful discussions.}
   \and Jamie Vicary \institute{
   Centre for Quantum Technologies,\\
   National University of Singapore\\[2pt]
   Department of Computer Science,\\University of Oxford}
   \email{jamie.vicary@cs.ox.ac.uk}
   \and Linde Wester \institute{Department of Computer Science,\\University of Oxford} \email{lindewester@gmail.com} 
}
\def\ys{0.875}
\begin{document}

\maketitle

\begin{abstract}
  There are two ways to describe the interaction between classical and quantum information categorically: one based on completely positive maps between Frobenius algebras, the other using symmetric monoidal 2-categories. This paper makes a first step towards combining the two. The integrated approach allows a unified description of quantum teleportation and classical encryption in a single 2-category, as well as a universal security proof applicable simultaneously to both scenarios.
\end{abstract}

\section{Introduction}
In the categorical approach to quantum information~\cite{abramskycoecke:cqm}, there are two main approaches to modelling the interaction between classical and quantum data, which can be summarized as follows:
\begin{itemize}
  \item Commutative Frobenius algebras model classical data, noncommutative Frobenius algebras model quantum data, completely positive maps model computational processes. The resulting compact category $\CPs[\cat{FHilb}]$, reviewed in Section~\ref{sec:cpstar}, incorporates both pure and mixed states in a single setting, while admitting a graphical calculus~\cite{selinger:cpm, coeckeperdrix, coeckeheunenkissinger:cpstar, heunenkissingerselinger:cpproj, coeckeheunen:cp, coeckeheunenkissinger:logic,heunencontrerascattaneo, vicary:quantumalgebras}.

  \item Objects model classical data, 1\-morphisms model quantum data, 2\-morphisms model computational processes. The resulting symmetric monoidal 2\-category, reviewed in Section~\ref{sec:hqt}, provides universal syntactic models that can encode entire procedures as single equations~\cite{barvicary, stayvicary, vicary:higherquantumtheory}.
\end{itemize} 
This article makes a first step towards combining both approaches while retaining the advantages of each. Section~\ref{sec:2construction} introduces a procedure that turns a suitable symmetric monoidal category $\cat{C}$ into a symmetric monoidal 2\-category $\Two[\cat{C}]$. It is based on the well-known structure of bimodules and homomorphisms, but with a new definition of bimodule composition in terms of splitting of an idempotent.

Section~\ref{sec:2cpshilb} investigates basic properties of $\Two[\CPs[\cat{FHilb}]]$. We show that on a large domain, which is sufficient for the intended application to quantum information, the 2\-category is well-defined. We also prove the surprising result that every finite groupoid gives rise to an object in $\Two[\CPs[\cat{FHilb}]]$ in a canonical way, suggesting that the 2\-category has a rich structure waiting to be explored.

Finally, Section~\ref{sec:applications} demonstrates the advantages of our combined approach.
We obtain:
\begin{itemize}
  \setlength\itemsep{0pt}
    \item An elegant abstract definition of measurement, that in $\Two[\CPs[\cat{FHilb}]]$ comes down to the usual mixed-state notion of positive operator--valued measurement.
    \item A single equation whose solutions in $\Two[\CPs[\cat{FHilb}]]$ simultaneously include  implementations of quantum teleportation and of classical encrypted communication.
    \item A single proof of security that applies simultaneously to both procedures.
\end{itemize}
There are several interesting directions for future work:
\begin{itemize}
  \setlength\itemsep{0pt}
  \item 
  How can objects of $2[\CPs[\cat{FHilb}]]$ be classified?
  \item 
  Is there a direct construction $\cat{C} \mapsto \mathrm{Mix}[\cat{C}]$ of 2\-categories such that $\Two[\CPs[\cat{C}]] \cong \mathrm{Mix}[2[\cat{C}]]$?
  \item 
  What are nonstandard models such as $\Two[\CPs[\cat{Rel}]]$ like?
  \item 
  Are there nonstandard solutions of the teleportation equation in $\Two[\CPs[\cat{FHilb}]]$, which are neither pure-state quantum teleportation or encrypted communication, but some hybrid process?
\end{itemize}

\subsection{The CP*--construction}
\label{sec:cpstar}

Categorical quantum mechanics deals with dagger monoidal categories~\cite{abramskycoecke:cqm}, which admit a graphical calculus; see~\cite{selinger:graphical}. Within this well-documented setting, let us very briefly recall the CP* construction from the perspective of~\cite[Lemma~1.2]{heunenkissingerselinger:cpproj}; for more details we refer to~\cite{coeckeheunenkissinger:cpstar, heunenkissingerselinger:cpproj, selinger:cpm}. This construction turns a dagger compact category $\cat{C}$ into a new category $\CPs[\cat{C}]$.
An object in $\CPs[\cat{C}]$ is a \emph{special dagger Frobenius algebra} in $\cat{C}$: an object $A$ with morphisms $\tinymult[white dot] \colon A \otimes A \to A$ and $\tinyunit[white dot] \colon I \to A$ satisfying the \emph{specialness} condition $\smash{\tinyhandle[white dot]} = \smash{\tinyid}$, as well as the \emph{dagger Frobenius algebra} laws:
{\def\quad{\hspace{0.1cm}}
\begin{calign}
\label{eq:frobenius}
  \begin{pic}[yscale=\ys]
    \node[white dot] (l) at (0,0) {};
    \node[white dot] (r) at (.5,.5) {};
    \draw (l.center) to[out=90,in=180] (r.center);
    \draw (r.center) to (.5,1);
    \draw (r.center) to[out=0,in=90] (1,-.5);
    \draw (l.center) to[out=180,in=90] (-.5,-.5);
    \draw (l.center) to[out=0,in=90] (.5,-.5);
  \end{pic}
  \quad=\quad
  \begin{pic}[yscale=\ys]
    \node[white dot] (r) at (0,0) {};
    \node[white dot] (l) at (-.5,.5) {};
    \draw (r.center) to[in=0,out=90] (l);
    \draw (l.center) to (-.5,1);
    \draw (l.center) to[in=90,out=180] (-1,-.5);
    \draw (r.center) to[in=90,out=0] (.5,-.5);
    \draw (r.center) to[in=90,out=180] (-.5,-.5);
  \end{pic}
  &
  \begin{pic}[yscale=\ys]
    \node[white dot] (m) at (0,.5) {};
    \node[white dot] (u) at (-.5,0) {};
    \draw (m.center) to (0,1);
    \draw (u.center) to[out=90,in=180] (m.center);
    \draw (m.center) to[out=0,in=90] (.5,-.5);
  \end{pic}
  \quad=\quad
  \begin{pic}[yscale=\ys]
    \draw (0,-.5) to (0,1);
  \end{pic}
  \quad=\quad
  \begin{pic}[yscale=\ys]
    \node[white dot] (m) at (0,.5) {};
    \node[white dot] (u) at (.5,0) {};
    \draw (m.center) to (0,1);
    \draw (u.center) to[out=90,in=0] (m.center);
    \draw (m.center) to[out=180,in=90] (-.5,-.5);
  \end{pic}
  &
  \begin{pic}[xscale=.75, yscale=\ys]
    \node[white dot] (t) at (.5,.5) {};
    \node[white dot] (b) at (-.5,0) {};
    \draw (b.center) to[out=0,in=180] (t.center);
    \draw (-.5,-.5) to (b.center);
    \draw (t.center) to (.5,1);
    \draw (t.center) to[out=0,in=90] (1,-.5);
    \draw (-1,1) to[out=-90,in=180] (b.center);
  \end{pic}
\quad=\quad
  \begin{pic}[xscale=.75, yscale=\ys]
    \node[white dot] (t) at (0,.5) {};
    \node[white dot] (b) at (0,0) {};
    \draw (b.center) to (t.center);
    \draw (t.center) to[out=180,in=-90] (-.5,1);
    \draw (t.center) to[out=0,in=-90] (.5,1);
    \draw (b.center) to[out=180,in=90] (-.5,-.5);
    \draw (b.center) to[out=0,in=90] (.5,-.5);
  \end{pic}
  \quad=\quad
  \begin{pic}[xscale=.75, yscale=\ys]
    \node[white dot] (t) at (-.5,.5) {};
    \node[white dot] (b) at (.5,0) {};
    \draw (b.center) to[out=180,in=0] (t.center);
    \draw (.5,-.5) to (b.center);
    \draw (t.center) to (-.5,1);
    \draw (t.center) to[out=180,in=90] (-1,-.5);
    \draw (1,1) to[out=-90,in=0] (b);
  \end{pic}
\end{calign}}%
Commutative such objects are also called \emph{classical structures}.
A morphism $(A,\tinymult[white dot],\tinyunit[white dot]) \to (B,\tinymult[gray dot],\tinyunit[gray dot])$ in $\CPs[\cat{C}]$ is a morphism $f \colon A \to B$ in $\cat{C}$ satisfying the \textit{complete positivity} condition
\begin{equation}\label{eq:cpstar-condition}
  \begin{pic}[yscale=\ys]
    \node[white dot] (w) at (-.25,-.7) {};
    \node[morphism] (f) at (0,0) {$f$};
    \node[gray dot] (b) at (-.25,.7) {};
    \draw (f.north) to[out=90,in=0] (b);
    \draw (b.center) to[out=90,in=90,looseness=1.5] (.5,.7) to (.5,-1);
    \draw (w.center) to[out=0,in=-90] (f.south);
    \draw (-.25,-1) to (w.center);
    \draw (w.center) to[out=180,in=-90,looseness=1.25] (-1,0) to (-1,1.25);
    \draw (-.75,1.25) to (-.75,.4) to[out=-90,in=-90, looseness=2] (-.5,.4) to[out=90,in=180] (b.center);
  \end{pic}
  \quad = \quad
  \begin{pic}[yscale=\ys]
    \draw (0.3,0.35) to (0.3,1.125);
    \draw (-0.3,0.35) to (-0.3,1.125);
    \draw (0.3,-0.35) to (0.3,-1.125);
    \draw (-0.3,-0.35) to (-0.3,-1.125);
    \draw (0,-0.35) to (0,0.35);
    \node[morphism, minimum width=10mm] (g) at (0,-.35) {$g$};
    \node[morphism, minimum width=10mm] (gd) at (0,.55) {$g^\dag$};
  \end{pic}
\end{equation}
for some morphism $g \colon A \otimes B^* \to X$ in $\cat{C}$. 
This gives a well-defined dagger compact category $\CPs[\cat{C}]$ with the following basic interpretation:
\begin{center}\begin{tabular}{lll}
  \textbf{Category theory} & \textbf{Geometry} & \textbf{Interpretation} \\
  Commutative objects & Lines with commutative dots & Classical information \\
  Noncommutative objects & Lines with noncommutative dots & Quantum information \\
  Morphisms & Vertices & Physical operations
\end{tabular}\end{center}

The CP*--construction is of fundamental importance because it turns a category of pure states and processes into a category of mixed states: applying the CP*--construction to the category $\cat{FHilb}$ of finite-dimensional Hilbert spaces and linear maps results in the category $\CPs[\cat{FHilb}]$ of finite-dimensional C*-algebras and completely positive maps.

\subsection{Higher quantum theory}
\label{sec:hqt}

Higher quantum theory~\cite{vicary:higherquantumtheory,vicary:topology} separates classical and quantum information by replacing monoidal categories by monoidal weak 2\-categories. These also have a graphical notation~\cite{lauda:ambidextrous}: 
\begin{center}\begin{tabular}{lll}
  \textbf{Category theory} & \textbf{Geometry} & \textbf{Interpretation} \\
  Objects & Surfaces & Classical information \\
  1\-Morphisms & Lines & Quantum systems \\
  2\-Morphisms & Vertices & Physical operations 
\end{tabular}\end{center}
Graphically, composition of 1\-morphisms is indicated by horizontal juxtaposition, and composition of 2\-morphisms by vertical juxtaposition. The tensor product is given by `overlaying' regions one above the other, perpendicular to the plane of the page.

Just like in the 1\-categorical case, the diagrams are interpreted as describing sequences of events taking place over time, with time running from bottom to top. A dagger provides a formal time-reversal of 2\-morphisms, represented graphically by reflecting a diagram about a horizontal axis.

\begin{definition}
  A \textit{dagger 2\-category} is a 2\-category equipped with an involutive operation $\dag$ on 2\-morphisms, such that $\mu^\dag \colon G \Rightarrow F$ for all $\mu \colon F \Rightarrow G$, in a way that is functorial and compatible with the rest of the monoidal 2\-category structure.
\end{definition}

The core theory uses the graphical components summarized below, motivated in detail in~\cite{vicary:higherquantumtheory}.
\def\aascale{1.4}
\def\aaspace{\hspace{30pt}}
\setlength\fboxsep{0pt}
\def\sep{5pt}
\def\innersep{3pt}
\def\littlegap{12pt}
\def\boxmargin{0.15cm}
\newcommand{\centerdia}[1]{#1}
\newcommand\newtwocell[2]{\begin{aligned}
\begin{tikzpicture}[scale=\aascale, yscale=\ys]
    #1
    \draw [black!20]
        ([xshift=-\boxmargin, yshift=-\boxmargin] current bounding box.south west)
        rectangle
        ([xshift=\boxmargin, yshift=\boxmargin] current bounding box.north east);
\end{tikzpicture}
\end{aligned}
\hspace{10pt} \makebox[80pt][l]{\vc{#2}}}
\newcommand\separatetwocells{\\[\sep]}
\begin{calign}
\newtwocell{
    \draw [fill=white, draw=none] (0.2,-0.5)
        to (0.5,-0.5)
        to (1.8,-0.5)
        to (1.8,-1.5)
        to (0.2,-1.5)
        to (0.2,-0.5);
    \draw [thick] (1,-0.5)
        to (1,-1.5);
}
{Quantum system}
\hspace{\littlegap}&\hspace{\littlegap}
\newtwocell{
    \draw [fill=\fillClight, draw=none] (0.2,0.5)
        to (0.5,0.5)
        to (1.8,0.5)
        to (1.8,1.5)
        to (0.2,1.5)
        to (0.2,0.5);
}
{Classical system}
\separatetwocells
\label{eq:firsttopboundary}
\newtwocell{
    \draw [white] (1.8,-0.5) rectangle (0.2,-1.5);
    \draw [fill=\fillClight, draw=none] (0.2,-0.5)
        to (0.5,-0.5)
        to (1,-0.5)
        to (1,-1.5)
        to (0.2,-1.5)
        to (0.2,-0.5);
    \draw [thick] (1,-0.5)
        to (1,-1.5);
}
{Right-hand boundary}
\hspace{\littlegap}&\hspace{\littlegap}
\newtwocell{
    \draw [white] (1.8,-0.5) rectangle (0.2,-1.5);
    \draw [fill=\fillClight, draw=none] (1.8,-0.5)
        to (1,-0.5)
        to (1,-1.5)
        to (1.8,-1.5)
        to (1.8,-0.5);
    \draw [thick] (1,-0.5)
        to (1,-1.5);
}
{Left-hand boundary}
\separatetwocells
\label{eq:secondtopboundary}
\newtwocell{
    \draw [fill=\fillClight, draw=none] (0.2,-0.5)
        to (0.5,-0.5)
        to [out=down, in=down, looseness=1.5] (1.5, -0.5)
        to (1.8,-0.5)
        to (1.8,-1.5)
        to (0.2,-1.5)
        to (0.2,-0.5);
    \draw [thick] (0.5,-0.5)
        to [out=down, in=down, looseness=1.5] (1.5,-0.5);
}
{Copy classical\\information}
\hspace{\littlegap}&\hspace{\littlegap}
\newtwocell{
    \draw [fill=\fillClight, draw=none] (0.2,0.5)
        to (0.5,0.5)
        to [out=up, in=up, looseness=1.5] (1.5,0.5)
        to (1.8,0.5)
        to (1.8,1.5)
        to (0.2,1.5)
        to (0.2,0.5);
    \draw [thick] (0.5,0.5)
        to [out=up, in=up, looseness=1.5] (1.5,0.5);
}
{Compare classical\\information}
\separatetwocells
\label{eq:lasttopboundary}
\newtwocell{
    \draw [white] (0.2,1) to (1.8,1);
    \draw [fill=\fillClight, thick] (0.5,1.5)
        to [out=down, in=down, looseness=1.5] (1.5,1.5);
    \draw [thick, white] (1,0.5) to (1,1);
}
{Create uniform\\classical information}
\hspace{\littlegap}&\hspace{\littlegap}
\newtwocell{
    \draw [white] (0.2,-1) to (1.8,-1);
    \draw [fill=\fillClight, thick] (0.5,-1.5)
        to [out=up, in=up, looseness=1.5] (1.5,-1.5);
    \draw [thick, white] (1,-0.5) to (1,-1);
}
{Delete classical\\information}
\end{calign}

\begin{definition}\label{def:topologicalboundary}
  An object in a symmetric monoidal 2\-category has a \emph{topological boundary} when it is equipped with data \eqref{eq:firsttopboundary}\-\eqref{eq:lasttopboundary} satisfying the following axioms, which amount to saying that the boundary of a classical system is topological and that holes can be eliminated:
\def\aascale{0.7}
\begin{calign}
\def\quad{\hspace{0.3cm}}
\label{eq:top1}
\begin{aligned}
\begin{tikzpicture}[scale=\aascale,xscale=0.8, yscale=\ys]
\draw [use as bounding box, draw=none] (-0.5,0) rectangle (2.3,2);
\draw [white] (-0.5,0) to (3.1,2);
\draw [fill=\fillClight, draw=none] (-0.5,0) to (0.3,0) to (0.3,1)
    to [out=up, in=up, looseness=2] (1.3,1)
    to [out=down, in=down, looseness=2] (2.3,1)
    to (2.3,2) to (-0.5,2);
\draw [thick] (0.3,0) to (0.3,1)
    to [out=up, in=up, looseness=2] (1.3,1)
    to [out=down, in=down, looseness=2] (2.3,1)
    to (2.3,2);
\end{tikzpicture}
\end{aligned}
\quad=\quad
\begin{aligned}
\begin{tikzpicture}[scale=\aascale,xscale=0.8, yscale=\ys]
\draw [use as bounding box, draw=none] (1,0) rectangle (0,2);
\draw [white] (0,0) to (2,2);
\draw [fill=\fillClight, draw=none] (0,0)
    to (1,0)
    to (1,2)
    to (0,2);
\draw [thick] (1,0) to (1,2);
\end{tikzpicture}
\end{aligned}
\quad=\quad
\begin{aligned}
\begin{tikzpicture}[scale=\aascale,xscale=0.8, yscale=\ys]
\draw [use as bounding box, draw=none] (-0.5,0) rectangle (2.3,-2);
\draw [white] (-0.5,0) to (3.1,-2);
\draw [fill=\fillClight, draw=none] (-0.5,0) to (0.3,0) to (0.3,-1)
    to [out=down, in=down, looseness=2] (1.3,-1)
    to [out=up, in=up, looseness=2] (2.3,-1)
    to (2.3,-2) to (-0.5,-2);
\draw [thick] (0.3,0) to (0.3,-1)
    to [out=down, in=down, looseness=2] (1.3,-1)
    to [out=up, in=up, looseness=2] (2.3,-1)
    to (2.3,-2);
\end{tikzpicture}
\end{aligned}
&
\begin{aligned}
\begin{tikzpicture}[scale=\aascale,xscale=0.8, yscale=\ys]
\draw [use as bounding box, draw=none] (-2.3,0) rectangle (0.5,2);
\draw [white] (0.5,0) to (-3.1,2);
\draw [fill=\fillClight, draw=none] (0.5,0) to (-0.3,0) to (-0.3,1)
    to [out=up, in=up, looseness=2] (-1.3,1)
    to [out=down, in=down, looseness=2] (-2.3,1)
    to (-2.3,2) to (0.5,2);
\draw [thick] (-0.3,0) to (-0.3,1)
    to [out=up, in=up, looseness=2] (-1.3,1)
    to [out=down, in=down, looseness=2] (-2.3,1)
    to (-2.3,2);
\end{tikzpicture}
\end{aligned}
\quad=\quad
\begin{aligned}
\begin{tikzpicture}[scale=\aascale,xscale=0.8, yscale=\ys]
\draw [use as bounding box, draw=none] (-1,0) rectangle (0,2);
\draw [white] (0,0) to (-2,2);
\draw [fill=\fillClight, draw=none] (0,0)
    to (-1,0)
    to (-1,2)
    to (-0,2);
\draw [thick] (-1,0) to (-1,2);
\end{tikzpicture}
\end{aligned}
\quad=\quad
\begin{aligned}
\begin{tikzpicture}[scale=\aascale,xscale=0.8, yscale=\ys]
\draw [use as bounding box, draw=none] (-2.3,0) rectangle (0.5,-2);
\draw [white] (0.5,0) to (-3.1,-2);
\draw [fill=\fillClight, draw=none] (0.5,0) to (-0.3,0) to (-0.3,-1)
    to [out=down, in=down, looseness=2] (-1.3,-1)
    to [out=up, in=up, looseness=2] (-2.3,-1)
    to (-2.3,-2) to (0.5,-2);
\draw [thick] (-0.3,0) to (-0.3,-1)
    to [out=down, in=down, looseness=2] (-1.3,-1)
    to [out=up, in=up, looseness=2] (-2.3,-1)
    to (-2.3,-2);
\end{tikzpicture}
\end{aligned}
\\
\label{eq:top2}
\begin{aligned}
\begin{tikzpicture}[yscale=\ys]
\draw [fill=\fillClight, draw=none] (0.5,0.5) rectangle (2.5,2.5);
\draw [fill=white, thick] (1,1.5)
    to [out=up, in=up, looseness=2] (2,1.5)
    to [out=down, in=down, looseness=2] (1,1.5);
\end{tikzpicture}
\end{aligned}
\quad=\quad
\begin{aligned}
\begin{tikzpicture}[yscale=\ys]
\draw [fill=\fillClight, draw=none] (0.5,0.5) rectangle (2.5,2.5);
\end{tikzpicture}
\end{aligned}
&
\begin{aligned}
\begin{tikzpicture}[scale=0.6, yscale=-0.85, yscale=\ys]
\draw [fill=\fillClight, draw=none] (0.4,0)
    to [out=up, in=down, out looseness=1.3] (2,2)
    to (2.6,2)
    to [out=down, in=up, out looseness=1.3] (1,0);
\draw [fill=\fillClight, draw=none] (2.6,0)
    to [out=up, in=down, out looseness=1.3] (1,2)
    to (0.4,2)
    to [out=down, in=up, out looseness=1.3] (2,0);
\draw [fill=\fillClight, draw=none] (0.4,2)
    to (1,2)
    to [out=up, in=up, looseness=1.5] (2,2)
    to (2.6,2)
    to [out=up, in=down] (2,4)
    to (1,4)
    to [out=down, in=up] (0.4,2);
\draw [thick] (0.4,0)
    to [out=up, in=down, out looseness=1.3] (2,2)
    to [out=up, in=up, looseness=1.5] (1,2)
    to [out=down, in=up, in looseness=1.3] (2.6,0);
\draw [thick] (1,0)
    to [out=up, in=down, in looseness=1.3] (2.6,2)
    to [out=up, in=down] (2,4);
\draw [thick] (2,0)
    to [out=up, in=down, in looseness=1.3] (0.4,2)
    to [out=up, in=down] (1,4);
\end{tikzpicture}
\end{aligned}
\quad=\quad
\begin{aligned}
\begin{tikzpicture}[scale=0.6, yscale=-0.85, yscale=\ys]
\draw [fill=\fillClight, draw=none] (0.4,0)
    to [out=up, in=down, out looseness=1.5] (0.4,2)
    to (1,2)
    to [out=down, in=up, out looseness=1.5] (1,0);
\draw [fill=\fillClight, draw=none] (2.6,0)
    to [out=up, in=down, out looseness=1.5] (2.6,2)
    to (2,2)
    to [out=down, in=up, out looseness=1.5] (2,0);
\draw [fill=\fillClight, draw=none] (0.4,2)
    to (1,2)
    to [out=up, in=up, looseness=1.5] (2,2)
    to (2.6,2)
    to [out=up, in=down] (2,4)
    to (1,4) to [out=down, in=up] (0.4,2);
\draw [thick] (2.6,0)
    to [out=up, in=down, in looseness=1.5] (2.6,2)
    to [out=up, in=down] (2,4);
\draw [thick] (0.4,0)
    to [out=up, in=down, in looseness=1.5] (0.4,2)
    to [out=up, in=down] (1,4);
\draw [thick] (2,0) 
    to (2,2)
    to [out=up, in=up, looseness=1.5] (1,2)
    to (1,0);
\end{tikzpicture}
\end{aligned}
\end{calign}
\end{definition}

\noindent
Whenever we make use of the above graphical notation, it is understood that we are depicting an object with topological boundary in a symmetric monoidal dagger 2\-category.

\section{The $2[-]$ construction}
\label{sec:2construction}

This section introduces a construction that turns a monoidal category $\cat{C}$ into a 2\-category $\Two[\cat{C}]$, in such a way that $\Two[\CPs[\cat{C}]]$ has the appropriate structure to express the teleportation equation solely in terms of objects and morphisms. The idea is to adapt the well-known algebraic construction of rings, bimodules, and bimodule homomorphisms~\cite{hazewinkeletal:algebra}. In Section~\ref{sec:bimodules} we will see how our construction is defined, and in Section~\ref{sec:bimodboundary} we will see that objects in $2[\cat C]$ have a topological boundary in the sense of Definition~\ref{def:topologicalboundary}.

\subsection{Bimodules and composition}
\label{sec:bimodules}

\begin{definition}
  Let $(C,\tinymult[white dot],\tinyunit[white dot])$ and $(D,\tinymult[black dot],\tinyunit[black dot])$ be dagger Frobenius algebras in a dagger monoidal category. A \emph{dagger $C$-$D$-bimodule} is a morphism $\mathbf{M}$ satisfying:
  \begin{equation}\label{eq:dagger-bimodule}
    \begin{pic}[yscale=\ys]
      \node[morphism, minimum width=10mm] (u) at (0,0) {$\mathbf{M}$};
      \node[morphism, minimum width=10mm] (l) at (0,-1.1) {$\mathbf{M}$};
      \draw (u.north) to (0,.6) node[right] {$M$};
      \draw (u.south) to (l.north);
      \draw (l.south) to (0,-2) node[right] {$M$};
      \draw (l.-45) to[out=-90,in=90] (.6,-2) node[right] {$D$};
      \draw (l.-135) to[out=-90,in=90] (-.6,-2) node[right] {$C$};
      \draw (u.-45) to[out=-90,in=90] (1.1,-1) to (1.1,-2) node[right] {$D$};
      \draw (u.-135) to[out=-90,in=90] (-1.1,-1) to (-1.1,-2) node[right] {$C$};
    \end{pic}
    =
    \begin{pic}[yscale=\ys]
      \node[morphism, minimum width=10mm] (m) at (0,0) {$\mathbf{M}$};
      \node[white dot] (l) at (-.75,-1) {};
      \node[black dot] (r) at (.75,-1) {};
      \draw (m.north) to (0,.6) node[right] {$M$};
      \draw (m.south) to (0,-2) node[right] {$M$};
      \draw (m.-45) to[out=-90,in=90] (r);
      \draw (m.-135) to[out=-90,in=90] (l);
      \draw (l) to[out=-150,in=90] (-1.1,-2) node[left] {$C$};
      \draw (l) to[out=-30,in=90] (-.4,-2) node[left] {$C$};
      \draw (r) to[out=-150,in=90] (.4,-2) node[right] {$D$};
      \draw (r) to[out=-30,in=90] (1.1,-2) node[right] {$D$};
    \end{pic}
    \qquad
    \begin{pic}[yscale=\ys]
      \draw (0,.6) node[right] {$M$} to (0,-2) node[right] {$M$};
    \end{pic}
    \!\!\!=\,\,\,
    \begin{pic}[yscale=\ys]
      \node[morphism, minimum width=10mm] (m) at (0,0) {$\mathbf{M}$};
      \draw (m.north) to (0,.6) node[right] {$M$};
      \draw (m.south) to (0,-2) node[right] {$M$};
      \draw (m.-135) to[out=-90,in=90] (-.35,-1) node[white dot] {};
      \draw (m.-45) to[out=-90,in=90] (.35,-1) node[black dot] {};
    \end{pic}
    \qquad
    \begin{pic}[yscale=\ys]
      \node[morphism, minimum width=10mm] (m) at (0,-.7) {$\textbf{M}^\dag$};
      \draw (m.south) to (0,-2) node[right]{$M$};
      \draw (m.north) to  (0,.6) node[right]{$M$};
      \draw (m.45) to[out=90,in=-90] (.5,.6) node[right]{$D$};
      \draw (m.135) to[out=90,in=-90] (-.5,.6) node[right]{$C$};
    \end{pic}
    \!\!\!=\,\,\,
    \begin{pic}[yscale=\ys]
      \node[morphism, minimum width=10mm] (m) at (0,-.7) {$\textbf{M}$};
      \draw (m.south) to (0,-2) node[right]{$M$};
      \draw (m.north) to (0,.6) node[right]{$M$};
      \node[white dot] (l) at (-.6,-1.3) {};
      \node[black dot] (r) at (.6,-1.3) {};
      \draw (m.-45) to[out=-90,in=150] (r);
      \draw (m.-135) to[out=-90,in=30] (l);
      \draw (r) to (.6,-1.6) node[black dot] {};
      \draw (l) to (-.6,-1.6) node[white dot] {};
      \draw (l) to[out=150, in=-90, looseness=.5] (-1,.6) node[right] {$C$};
      \draw (r) to[out=30, in=-90, looseness=.5] (1,.6) node[right] {$D$};
    \end{pic}
  \end{equation}
  We also call the object $M$ the bimodule, and the map $\mathbf{M}$ its \emph{action}, and write $\mathbf{M}_{\tinydot[black dot]} = \mathbf{M} \smash{(\tinyid \; \tinyid {\raisebox{-2pt}{\tinyunit[black dot]}})}$ and ${}_{\tinydot[white dot]}\mathbf{M} = \mathbf{M} \smash{{\raisebox{-2pt}{(\tinyunit[white dot]}} \; \tinyid \; \tinyid)}$.
  A \emph{homomorphism} of dagger $C$-$D$-bimodules is a morphism $f \colon M \to M'$ that respects that actions by satisfying $f \mathbf{M} = \mathbf{M'} (\id[C] \otimes f \otimes \id[D])$.
\end{definition}

If $M$ is a $C$-$D$-bimodule, and $N$ is a $D$-$E$-bimodule, the standard algebraic construction of \emph{tensor product} gives a $C$-$E$-bimodule $M \otimes_D N$; see~\cite[Section~4.5]{hazewinkeletal:algebra}. It is constructed by forcing the right $D$\-action on $M$ and the left $D$-action on $N$ to cooperate. More precisely, it is the coequalizer of the two morphisms $M \otimes D \otimes N \to M \otimes N$ induced by the two $D$-actions. 

One way to guarantee the existence of such a coequalizer  is to require that some morphisms have a sensible notion of \textit{image}, as in the following definition and lemma. Recall that an endomorphism $p \colon A \to A$ is a \emph{dagger idempotent} when $p^2=p=p^\dag$. A dagger idempotent $p$ \emph{splits} when $p=ii^\dag$ and $i^\dag i=\id$ for some morphism $i$, called the \emph{image} of $p$. Split idempotents are a special case of dagger coequalizers~\cite{selinger:idempotents}: a dagger idempotent $p \colon A \to A$ splits if and only if $p$ and $\id[A]$ have a dagger coequalizer~$i^\dag$.

\begin{definition}
  A dagger monoidal category \emph{has dagger Frobenius images} when for all classical structures $(C,\tinymult[white dot],\tinyunit[white dot])$, $(D,\tinymult[gray dot],\tinyunit[gray dot])$, $(E,\tinymult[black dot],\tinyunit[black dot])$, for all $C$-$D$-bimodules $\mathbf{M}$ and all $D$-$E$-bimodules $\textbf{N}$, the following dagger idempotent splits:
  \begin{equation}\label{eq:frobeniusimage}
    \begin{pic}[yscale=\ys]
      \draw [use as bounding box, draw=none] (-1.5,-1) rectangle (1.5,1);
      \node[morphism, minimum width=10mm] (m) at (-.75,.2) {$\mathbf{M}$};
      \node[morphism, minimum width=10mm] (n) at (.75,.2) {$\mathbf{N}$};
      \draw (m.north) to (-.75,1) node[right] {$M$};
      \draw (n.north) to (.75,1) node[right] {$N$};
      \draw (m.south) to (-.75,-1) node[right] {$M$};
      \draw (n.south) to (.75,-1) node[right] {$N$};
      \draw (m.-135) to[out=-90,in=90] (-1.1,-.6) node[white dot] {};
      \draw (n.-45) to[out=-90,in=90] (1.1,-.6) node[black dot] {};
      \draw (m.-45) to[out=-90,in=150] (0,-.4) node[gray dot] {} to[out=30,in=-90] (n.-135);
      \draw (0,-.5) to (0,-.75) node[gray dot] {};
    \end{pic}
  \end{equation}
  Notice that this morphism is indeed dagger idempotent by~\eqref{eq:dagger-bimodule}.
\end{definition}

We denote the image of~\eqref{eq:frobeniusimage} by $i \colon M \tinydot[gray dot] N \to M \otimes N$. It is a dagger $C$-$E$-bimodule:
\begin{equation}\label{eq:bimodule-tensor}
  \begin{pic}[yscale=\ys]
    \node[morphism, minimum width=10mm] (mn) at (0,0) {$\mathbf{M} \;\;\; \mathbf{N}$};
    \node[gray dot] at (0.05,0) {};
    \draw (mn.north) to (0,1.8) node[right] {$M \tinydot[gray dot] N$};
    \draw (mn.south) to (0,-2) node[right] {$M \tinydot[gray dot] N$};
    \draw (mn.-45) to[out=-90,in=90] (1,-2) node[right] {$E$};
    \draw (mn.-135) to[out=-90,in=90] (-1,-2) node[right] {$C$};
  \end{pic}
  \qquad := \qquad
  \begin{pic}[yscale=\ys]
      \node[morphism] (i) at (0,-1.3) {$i$};
      \node[morphism] (id) at (0,1.2) {$i^\dag$};
      \node[morphism, minimum width=10mm] (m) at (-.75,.2) {$\mathbf{M}$};
      \node[morphism, minimum width=10mm] (n) at (.75,.2) {$\mathbf{N}$};
      \draw (m.north) to[out=90,in=-90] (id.-135); 
      \draw (n.north) to[out=90,in=-90] (id.-45);
      \draw (m.south) to[out=-90,in=90] (i.135);
      \draw (n.south) to[out=-90,in=90] (i.45);
      \draw (m.-45) to[out=-90,in=150] (0,-.4) node[gray dot] {} to[out=30,in=-90] (n.-135);
      \draw (0,-.5) to (0,-.75) node[gray dot] {};
      \draw (m.-135) to[out=-90,in=90] (-1.25,-2) node[right] {$C$};
      \draw (n.-45) to[out=-90,in=90] (1.25,-2) node[right] {$E$};
      \draw (id.north) to (0,1.8) node[right] {$M \tinydot[gray dot] N$};
      \draw (i.south) to (0,-2) node[right] {$M \tinydot[gray dot] N$};
  \end{pic}
\end{equation}

\begin{lemma}\label{lem:coequalizer}
  If~\eqref{eq:frobeniusimage} splits with image $i$, then $i^\dag$ is a coequalizer of ${}_{\tinydot[white dot]}\mathbf{M} \otimes \id[N]$ and $\id[M] \otimes \mathbf{N}_{\tinydot[black dot]}$.
\end{lemma}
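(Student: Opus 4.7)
The plan is to exploit the observation recalled just before the lemma: any dagger idempotent $p = i i^\dag$ automatically makes $i^\dag$ a coequalizer of $p$ and $\id$. Writing $p$ for the idempotent~\eqref{eq:frobeniusimage} and, for brevity, $\alpha := {}_{\tinydot[white dot]}\mathbf{M} \otimes \id[N]$ and $\beta := \id[M] \otimes \mathbf{N}_{\tinydot[black dot]}$, the lemma reduces to two claims: (i) $i^\dag \alpha = i^\dag \beta$, and (ii) every $f \colon M \otimes N \to X$ with $f \alpha = f \beta$ satisfies $fp = f$. Together these transport the universal property of $i^\dag$ from the pair $(p, \id)$ to the pair $(\alpha, \beta)$, since then $g := f \circ i$ satisfies $g \circ i^\dag = f \circ p = f$, with uniqueness of $g$ forced by $i^\dag i = \id$ (any $g'$ with $g'i^\dag = f$ gives $g' = g'i^\dag i = fi = g$).

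For (i), the key intermediate step is $p \alpha = p \beta$; granting this, $i^\dag p = i^\dag i i^\dag = i^\dag$ immediately yields $i^\dag \alpha = i^\dag p \alpha = i^\dag p \beta = i^\dag \beta$. To see $p \alpha = p \beta$, I would unfold both sides graphically: each inserts an extra $D$-input at the top of the gray unit-comultiplication feeding $\mathbf{M}$ and $\mathbf{N}$, on the $M$-side for $p \alpha$ and on the $N$-side for $p \beta$. Using bimodule associativity~\eqref{eq:dagger-bimodule} to merge the extra $D$ with the adjacent branch of the copied unit through the multiplication $\tinymult[gray dot]$, and then applying the Frobenius laws~\eqref{eq:frobenius} in the form $(\tinymult[gray dot] \otimes \id) \circ (\id \otimes \Delta) = \Delta \circ \tinymult[gray dot] = (\id \otimes \tinymult[gray dot]) \circ (\Delta \otimes \id)$ together with unitality of $D$, both diagrams simplify to the same picture, in which the extra $D$-input is copied by $\Delta$ and fed directly into the two bimodule actions.

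For (ii), assuming $f \alpha = f \beta$, I would draw $fp$ and apply the coequalizing hypothesis once to slide the $M$-bound branch of the copied unit across, so that both branches now act on $N$ rather than one acting on each side. Associativity of the $N$-action in~\eqref{eq:dagger-bimodule} then merges these two stacked left actions through $\tinymult[gray dot]$, producing $\tinymult[gray dot] \circ \Delta \circ \eta$ feeding the left $D$-port of $\mathbf{N}$. Specialness $\tinymult[gray dot] \circ \Delta = \id$ collapses this to $\eta$, and unitality of the left $N$-action in~\eqref{eq:dagger-bimodule} removes the remaining $D$-wire, leaving the bare $f$.

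The main obstacle will be step~(ii): choosing how to apply the hypothesis $f\alpha = f\beta$ so that the two surviving $D$-strands stack in exactly the right order for the Frobenius and specialness identities to fire in sequence. Beyond this, the argument is a direct application of the universal property of split dagger idempotents combined with the bimodule axioms~\eqref{eq:dagger-bimodule} and the Frobenius axioms~\eqref{eq:frobenius}, and I expect the graphical calculus to make both computations essentially mechanical once the right rewrite order is found.
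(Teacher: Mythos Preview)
Your proposal is correct and follows essentially the same route as the paper. Both arguments establish $p\alpha = p\beta$ via the common ``comultiply the external $D$ and feed both actions'' diagram, and both reduce the universal property to showing $fp = f$ under the hypothesis $f\alpha = f\beta$; the only cosmetic difference is that the paper runs the chain $f \to fp$ by duplicating the action on the $M$-side, whereas you run $fp \to f$ by collapsing on the $N$-side, which are mirror images of the same graphical manipulation.
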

\begin{proof}
  Observe that $i^\dag ({}_{\tinydot[white dot]}\mathbf{M} \otimes \id[N]) = i^\dag (\id[M] \otimes \mathbf{N}_{\tinydot[black dot]})$ because $(\mathbf{M} \tinydot[gray dot] \mathbf{N}) ({}_{\tinydot[white dot]}\mathbf{M} \otimes \id[N]) = (\id[M] \otimes \mathbf{N}_{\tinydot[black dot]})({}_{\tinydot[white dot]}\mathbf{M} \otimes \id[N])$:
  \[
    \begin{pic}[yscale=\ys]
      \node[morphism, minimum width=10mm] (m) at (-.75,.2) {$\mathbf{M}$};
      \node[morphism, minimum width=10mm] (n) at (.75,.2) {$\mathbf{N}$};
      \node[morphism, minimum width=10mm] (l) at (-.75, -1) {$\mathbf{M}$};
      \draw (m.north) to (-.75,.8) node[right] {$M$};
      \draw (n.north) to (.75,.8) node[right] {$N$};
      \draw (m.south) to (l.north);
      \draw (n.south) to (.75,-2) node[right] {$N$};
      \draw (m.-135) to[out=-90,in=90] (-1.1,-.4) node[white dot] {};
      \draw (n.-45) to[out=-90,in=90] (1.1,-.4) node[black dot] {};
      \draw (m.-45) to[out=-90,in=150] (0,-.4) node[gray dot] {} to[out=30,in=-90] (n.-135);
      \draw (0,-.5) to (0,-.75) node[gray dot] {};
      \draw (l.south) to (-.75,-2) node[right] {$M$};
      \draw (l.-45) to[out=-90,in=90] (0,-2) node[right] {$D$};
      \draw (l.-135) to[out=-90,in=90] (-1.1,-1.6) node[white dot] {};
    \end{pic}
    \quad = \quad
    \begin{pic}[yscale=\ys]
      \node[morphism, minimum width=10mm] (m) at (-.75,.2) {$\mathbf{M}$};
      \node[morphism, minimum width=10mm] (n) at (.75,.2) {$\mathbf{N}$};
      \draw (m.north) to (-.75,.8) node[right] {$M$};
      \draw (n.north) to (.75,.8) node[right] {$N$};
      \draw (m.south) to (-.75,-2) node[right] {$M$};
      \draw (n.south) to (.75,-2) node[right] {$N$};
      \draw (m.-135) to[out=-90,in=90] (-1.1,-.4) node[white dot] {};
      \draw (n.-45) to[out=-90,in=90] (1.1,-.4) node[black dot] {};
      \draw (m.-45) to[out=-90,in=150] (0,-1) node[gray dot] {} to[out=30,in=-90] (n.-135);
      \draw (0,-1) to (0,-2) node[right] {$D$};
    \end{pic}
    \quad = \quad
    \begin{pic}[yscale=\ys]
      \node[morphism, minimum width=10mm] (m) at (-.75,.2) {$\mathbf{M}$};
      \node[morphism, minimum width=10mm] (n) at (.75,.2) {$\mathbf{N}$};
      \node[morphism, minimum width=10mm] (l) at (.75, -1) {$\mathbf{N}$};
      \draw (m.north) to (-.75,.8) node[right] {$M$};
      \draw (n.north) to (.75,.8) node[right] {$N$};
      \draw (n.south) to (l.north);
      \draw (m.south) to (-.75,-2) node[right] {$M$};
      \draw (m.-135) to[out=-90,in=90] (-1.1,-.4) node[white dot] {};
      \draw (n.-45) to[out=-90,in=90] (1.1,-.4) node[black dot] {};
      \draw (m.-45) to[out=-90,in=150] (0,-.4) node[gray dot] {} to[out=30,in=-90] (n.-135);
      \draw (0,-.5) to (0,-.75) node[gray dot] {};
      \draw (l.south) to (.75,-2) node[right] {$N$};
      \draw (l.-135) to[out=-90,in=90] (0,-2) node[right] {$D$};
      \draw (l.-45) to[out=-90,in=90] (1.1,-1.6) node[black dot] {};
    \end{pic}
  \]
  Suppose that $f ({}_{\tinydot[white dot]}\mathbf{M} \otimes \id[N]) = f (\id[M] \otimes \mathbf{N}_{\tinydot[black dot]})$. Then $f$ factors through $i^\dag$ as $f=fii^\dag$:
  \[
    \begin{pic}[yscale=\ys]
      \node[morphism, minimum width=10mm] (f) at (0,1.25) {$f$};
      \draw (f.north) to (0,1.8);
      \draw (f.-45) to[out=-90,in=90] (.5,-1.8) node[right] {$N$};
      \draw (f.-135) to[out=-90,in=90] (-.5,-1.8) node[right] {$M$};
    \end{pic}
    \; = \,\,\,\,\,\,
    \begin{pic}[yscale=\ys]
      \node[morphism, minimum width=10mm] (f) at (0,1.25) {$f$};
      \node[morphism, minimum width=10mm] (m) at (-.75,.2) {$\mathbf{M}$};
      \draw (f.north) to (0,1.8);
      \draw (m.north) to[out=90,in=-90] (f.-135);
      \draw (m.south) to (-.75,-1.8) node[right] {$M$};
      \draw (m.-135) to[out=-90,in=90] (-1.1,-.4) node[white dot] {};
      \draw (m.-45) to[out=-90,in=90] (-.4,-.4) node[gray dot] {};
      \draw (f.-45) to[out=-90,in=90] (.75,-1.8) node[right] {$N$};
    \end{pic}
    \; = \quad
    \begin{pic}[yscale=\ys]
      \node[morphism, minimum width=10mm] (f) at (0,1.25) {$f$};
      \node[morphism, minimum width=10mm] (m) at (-.75,.2) {$\mathbf{M}$};
      \node[morphism, minimum width=10mm] (m2) at (-.75,-1) {$\mathbf{M}$};
      \draw (f.north) to (0,1.8);
      \draw (m.north) to[out=90,in=-90] (f.-135);
      \draw (m.south) to (m2.north);
      \draw (m2.south) to (-.75,-1.8) node[right] {$M$};
      \draw (m.-135) to[out=-90,in=90] (-1.1,-.4) node[white dot] {};
      \draw (m2.-135) to[out=-90,in=90] (-1.1,-1.6) node[white dot] {};
      \draw (f.-45) to[out=-90,in=90] (.75,-1.8) node[right] {$N$};
      \draw (m2.-45) to[out=-90, in=150] (-.1,-1.5) node[gray dot] {} to[out=30,in=-90] (.2,-1.2) to[out=90,in=-90] (m.-45);
      \draw (-.1,-1.5) to (-.1,-1.8) node[gray dot] {};
    \end{pic}
    \; = \quad
    \begin{pic}[yscale=\ys]
      \node[morphism, minimum width=10mm] (f) at (0,1.25) {$f$};
      \node[morphism, minimum width=10mm] (m) at (-.75,.2) {$\mathbf{M}$};
      \node[morphism, minimum width=10mm] (n) at (.75,.2) {$\mathbf{N}$};
      \draw (f.north) to (0,1.8);
      \draw (m.north) to[out=90,in=-90] (f.-135);
      \draw (n.north) to[out=90,in=-90] (f.-45);
      \draw (m.south) to (-.75,-1.8) node[right] {$M$};
      \draw (n.south) to (.75,-1.8) node[right] {$N$};
      \draw (m.-135) to[out=-90,in=90] (-1.1,-.6) node[white dot] {};
      \draw (n.-45) to[out=-90,in=90] (1.1,-.6) node[black dot] {};
      \draw (m.-45) to[out=-90,in=150] (0,-.4) node[gray dot] {} to[out=30,in=-90] (n.-135);
      \draw (0,-.5) to (0,-.75) node[gray dot] {};
    \end{pic}
  \]
  This mediating map is unique: if $f=mi^\dag$, then $m=mi^\dag i=fi$.
\end{proof}

We can use this technique to re-prove many standard results about bimodules in a graphical way, such as the following simple result, that will be useful later.

\begin{lemma}\label{lem:identitybimodule}
  For any dagger Frobenius algebra $(A,\tinymult[white dot],\tinyunit[white dot])$, the identity $A$-$A$-bimodule is $\smash{{\hspace{-5pt}\ensuremath{\begin{pic}[scale=0.4,string,yscale=-1]
      \node (0) at (0,0) {};
      \node[white dot, inner sep=1.5pt] (1) at (0,0.55) {};
      \node (2) at (-0.5,1) {};
      \node (3) at (0.5,1) {};
      \draw (0.center) to (1.center);
      \draw (1.center) to [out=left, in=down, out looseness=1.5] (2.center);
      \draw (1.center) to [out=right, in=down, out looseness=1.5] (3.center);
      \draw (1.center) to (0,1);
      \end{pic}}\hspace{-3pt}}}$.
  \qed
\end{lemma}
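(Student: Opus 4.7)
The trident displayed in the statement denotes the map $A \otimes A \otimes A \to A$ obtained by iterating $\tinymult[white dot]$. The plan is first to verify that this defines a dagger $A$-$A$-bimodule structure on $A$, and then to show that this bimodule acts as an identity under the tensor product of bimodules~\eqref{eq:bimodule-tensor}.

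The three axioms of~\eqref{eq:dagger-bimodule} are direct consequences of the Frobenius structure. Bimodule associativity (the first axiom) reduces to associativity of $\tinymult[white dot]$ applied on each wing of the trident. The unit axiom (the second axiom) is precisely the unit law $\tinyunit[white dot]$, absorbed separately into each wing. The dagger compatibility (the third axiom) follows from the dagger Frobenius law~\eqref{eq:frobenius} applied symmetrically on the left and right.

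For the identity property, I would take an arbitrary $A$-$E$-bimodule $(N,\mathbf{N})$ and exhibit an isomorphism $A \otimes_A N \cong N$ of $A$-$E$-bimodules. Substituting $\mathbf{M} = \tinymult[white dot]$ into the idempotent~\eqref{eq:frobeniusimage} and simplifying via the unit axiom on the $C$-strand and the specialness of $A$, one obtains a dagger idempotent on $A \otimes N$ which I claim splits via the left action $\pi := {}_{\tinydot[white dot]}\mathbf{N} \colon A \otimes N \to N$, with section $s \colon N \to A \otimes N$ constructed from $\mathbf{N}^\dag$ and the Frobenius comultiplication on $A$ (capped off by $\tinyunit[black dot]$ on the $E$-strand). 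The relation $\pi s = \id[N]$ follows from the unit axiom of $\mathbf{N}$. The relation $s\pi = $ (idempotent~\eqref{eq:frobeniusimage}) will require the third axiom of~\eqref{eq:dagger-bimodule} together with the Frobenius law to rearrange the diagram into the correct form. By Lemma~\ref{lem:coequalizer}, this identifies $N$ as the splitting, and one verifies that the induced bimodule structure on $N$ via~\eqref{eq:bimodule-tensor} coincides with the original~$\mathbf{N}$.

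The main obstacle is this last graphical identity: unfolding $s\pi$ and recognising it as the specific idempotent in~\eqref{eq:frobeniusimage} after simplification. A symmetric argument using the right action of $\mathbf{M}$ on a $C$-$A$-bimodule $M$ gives $M \otimes_A A \cong M$, so the trident bimodule is a two-sided identity under bimodule tensor product, establishing the lemma.
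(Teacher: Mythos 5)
Your plan is correct, but note that the paper offers no proof at all here: Lemma~\ref{lem:identitybimodule} is stated with its proof omitted, being presented as one of the ``standard results about bimodules'' that the splitting technique re-proves graphically. So there is nothing in the paper to diverge from; what you propose is the natural argument, and it is exactly the technique the paper does spell out for the neighbouring Lemma~\ref{lem:rlcomposite}, where a composite bimodule is identified by exhibiting the dagger splitting of the idempotent~\eqref{eq:frobeniusimage} and simplifying with the Frobenius and specialness axioms. Two points to tighten. First, you credit $\pi s = \id[N]$ to the unit axiom of $\mathbf{N}$ alone, but the computation also uses associativity of the action and, crucially, \emph{specialness} of $A$: collapsing the cup produces the composite of the comultiplication followed by the multiplication, which is the identity only for a special algebra. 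Specialness is likewise what makes~\eqref{eq:frobeniusimage} a dagger idempotent in the first place. The lemma as stated says only ``dagger Frobenius algebra'', but specialness is implicitly in force (the objects of $\Two[\cat{C}]$ are classical structures), and your proof should invoke it explicitly. Second, for a \emph{dagger} splitting the section must be exactly the adjoint $\pi^\dag$ of the retraction; your $s$, obtained from $\mathbf{N}^\dag$ via the third axiom of~\eqref{eq:dagger-bimodule} and capped with the unit on the $E$-strand, is indeed $\pi^\dag$, but you should say so rather than treat $s$ and $\pi$ as independently chosen maps that happen to compose correctly.
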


With this preparation we can now define our main construction.
\begin{proposition}\label{prop:two}
  If $\cat{C}$ is a dagger monoidal category that has dagger Frobenius images, then the following data define a symmetric monoidal (weak) 2\-category $\Two[\cat{C}]$:
  \begin{itemize}
  \setlength\itemsep{0pt}
    \item objects are classical structures in $\cat{C}$;
    \item 1\-morphisms are dagger bimodules; 
      the identity 1\-morphism on $(C,\tinymult[gray dot],\tinyunit[gray dot])$ is 
      $\smash{{\hspace{-5pt}\ensuremath{\begin{pic}[scale=0.4,string,yscale=-1]
      \node (0) at (0,0) {};
      \node[gray dot, inner sep=1.5pt] (1) at (0,0.55) {};
      \node (2) at (-0.5,1) {};
      \node (3) at (0.5,1) {};
      \draw (0.center) to (1.center);
      \draw (1.center) to [out=left, in=down, out looseness=1.5] (2.center);
      \draw (1.center) to [out=right, in=down, out looseness=1.5] (3.center);
      \draw (1.center) to (0,1);
      \end{pic}}\hspace{-3pt}}}$;
    \item 2\-morphisms are dagger bimodule homomorphisms; 
    \item horizontal composition of 1\-morphisms is given by~\eqref{eq:bimodule-tensor};
    \item horizontal composition of 2\-morphisms follows from the universal property of Lemma~\ref{lem:coequalizer};
    \item monoidal structure is inherited from $\cat{C}$.
  \end{itemize}
\end{proposition}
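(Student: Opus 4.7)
The plan is to verify each piece of 2-categorical structure in turn, using Lemma~\ref{lem:coequalizer} as the main lever to reduce equations between bimodule homomorphisms to equations between morphisms in $\cat{C}$. I would begin with the unitors: for a $C$-$D$-bimodule $\mathbf{M}$, show that the action $\mathbf{M}$ itself exhibits $M$ as a splitting of the Frobenius-image idempotent~\eqref{eq:frobeniusimage} applied to $\mathbf{M}$ and the identity $D$-$D$-bimodule of Lemma~\ref{lem:identitybimodule}. The composite $\mathbf{M} \mathbf{M}^\dag$ can be reshaped into the relevant instance of~\eqref{eq:frobeniusimage} using the right bimodule laws in~\eqref{eq:dagger-bimodule}, and $\mathbf{M}^\dag \mathbf{M} = \id$ follows by combining the dagger-bimodule and specialness conditions; this gives a canonical 1-cell isomorphism $M \otimes_D D \cong M$ in $\Two[\cat{C}]$, and the left unitor is symmetric.

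Next I would verify horizontal composition of 2-morphisms and the interchange law. Given bimodule homomorphisms $f \colon M \to M'$ and $g \colon N \to N'$, the tensor $f \otimes g$ in $\cat{C}$ commutes with the two middle $D$-actions by the homomorphism condition, so by Lemma~\ref{lem:coequalizer} it descends uniquely to a bimodule homomorphism between the coequalizers. Functoriality in each factor and the interchange law then follow at once from the uniqueness clause of the universal property; the same uniqueness ensures these descended maps respect the dagger.

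The technical heart is the associator. For composable bimodules $M$, $N$, $P$ over $C, D, E, F$, both bracketings $(M \otimes_D N) \otimes_E P$ and $M \otimes_D (N \otimes_E P)$ arise as successive splittings of two dagger idempotents $e_D, e_E$ on $M \otimes N \otimes P$ built from~\eqref{eq:frobeniusimage}. The plan is to check that $e_D$ and $e_E$ commute, so that their product $e_D e_E$ is itself a dagger idempotent whose unique splitting is canonically isomorphic to both bracketings, furnishing the associator. Pentagon, triangle and hexagon coherences then reduce, by repeated application of Lemma~\ref{lem:coequalizer} to pull equations back to $M \otimes N \otimes P \otimes \cdots$, to the corresponding coherences in $\cat{C}$. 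The symmetric monoidal structure on $\Two[\cat{C}]$ is inherited from $\cat{C}$: the componentwise tensor of classical structures is again a classical structure, the tensor of bimodules is a bimodule, and the symmetry descends to the quotients; the necessary interchange of $\otimes$ with the bimodule tensor is another universal-property argument.

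The main obstacle I expect is the commutation $e_D e_E = e_E e_D$ underlying the associator. This requires a careful graphical calculation using the Frobenius laws on $D$ and $E$ together with the compatibility of the two middle actions at each site; it is the one step where genuinely new bookkeeping is needed, rather than a formal appeal to the universal property. Once this is pinned down, essentially every remaining axiom of a symmetric monoidal weak 2-category is forced by the uniqueness clause of Lemma~\ref{lem:coequalizer}.
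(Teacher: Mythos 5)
Your proposal is correct in outline, but it is worth saying up front that it does not follow the paper's route, because the paper has no route: the printed proof of Proposition~\ref{prop:two} consists entirely of a deferral to the reference \cite{wester:mscthesis} (plus the remark that monoidality is ``a huge exercise that nevertheless seems straightforward enough''). You are therefore supplying an argument where the paper supplies a citation. What you sketch is the standard verification that dagger bimodules over special dagger Frobenius algebras form a bicategory when the relevant idempotents split, and your decomposition of the work is the right one: unitors from the fact that the action itself splits the idempotent against the identity bimodule of Lemma~\ref{lem:identitybimodule} (using the third equation of~\eqref{eq:dagger-bimodule} together with specialness for $i^\dag i = \id$, and the Frobenius law to reshape $ii^\dag$ into~\eqref{eq:frobeniusimage}); horizontal composition, interchange, and dagger-compatibility of 2\-morphisms forced by the uniqueness clause of Lemma~\ref{lem:coequalizer}; and the associator as the genuinely technical point. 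You correctly locate the crux there: the two idempotents $e_D$ and $e_E$ on $M \otimes N \otimes P$ overlap only on $N$, where one uses the left $D$\-action and the other the right $E$\-action, and these commute by the first (associativity) equation of~\eqref{eq:dagger-bimodule}; one extra step you should make explicit is that the iterated splitting $(M \otimes_D N) \otimes_E P$, which a priori splits an idempotent on $(M \otimes_D N) \otimes P$ rather than on $M \otimes N \otimes P$, is transported along $i \otimes \id[P]$ to the splitting of $e_E e_D$, and dually for the other bracketing; splittings of a dagger idempotent being unique up to unique unitary then hands you a canonical, coherent, unitary associator. With that caveat your sketch is sound, and it is more informative than what the paper prints, though of course at this level of detail it remains a proof plan rather than a complete verification of all the weak 2\-category and symmetric monoidal coherence data.
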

  \noindent
  More precisely, the horizontal composition of 2\-morphisms $f \colon \mathbf{M} \to \mathbf{M'}$ and $g \colon \mathbf{N} \to \mathbf{N'}$ is the unique arrow making the following diagram commute:
  \[\begin{pic}[xscale=4,yscale=1.66, thin]
    \node (tl) at (-.3,1) {$M \otimes D \otimes N$};
    \node (bl) at (-.3,0) {$M' \otimes D \otimes N'$};
    \node (t) at (1,1) {$M \otimes N$};
    \node (b) at (1,0) {$M' \otimes N'$};
    \node (tr) at (1.7,1) {$M \tinydot[gray dot] N$};
    \node (br) at (1.7,0) {$M' \tinydot[gray dot] N'$};
    \draw[->] (tl) to node[left] {$(f \otimes \id[D] \otimes g)$} (bl);
    \draw[->] (t) to node[right] {$f \otimes g$} (b);
    \draw[->, dashed] (tr) to node[right] {$f \tinydot[gray dot] g$} (br);
    \draw[->] (t) to node[above] {$i^\dag$} (tr);
    \draw[->] (b) to node[below] {$i'^\dag$} (br);
    \draw[->] ([yshift=1.5pt]tl.east) to node[above] {${}_{\tinydot[white dot]}\mathbf{M} \otimes \id[N]$} ([yshift=1.5pt]t.west);
    \draw[->] ([yshift=-1.5pt]tl.east) to node[below] {$\id[M] \otimes \mathbf{N}_{\tinydot[black dot]}$} ([yshift=-1.5pt]t.west);
    \draw[->] ([yshift=1.5pt]bl.east) to node[above] {${}_{\tinydot[white dot]}\mathbf{M'} \otimes \id[N']$} ([yshift=1.5pt]b.west);
    \draw[->] ([yshift=-1.5pt]bl.east) to node[below] {$\id[M'] \otimes \mathbf{N'}_{\tinydot[black dot]}$} ([yshift=-1.5pt]b.west);
  \end{pic}\]
\begin{proof}
  For verification that these data indeed satisfy all the conditions required of a weak 2\-category, see~\cite{wester:mscthesis}. Verifying monoidality is a huge exercise that nevertheless seems straightforward enough.
\end{proof}

\noindent
We end this subsection by listing some properties of the $\Two[-]$--construction; for proofs we refer to~\cite{wester:mscthesis}.
\begin{itemize}
  \setlength\itemsep{0pt}
  \item If $\cat{C}$ has a dagger, so does $\Two[\cat{C}]$.
  \item If $\cat{C}$ is compact, so is $\Two[\cat{C}]$: 1\-morphisms have duals that are both left and right adjoint.
  \item If $\cat{C}$ has dagger biproducts, so do all hom-categories of $\Two[\cat{C}]$.
  \item The scalars of $\Two[\cat{C}]$ correspond to $\cat{C}$: there is an isomorphism $\Two[\cat{C}](I,I) \cong \cat{C}$ of categories.
\end{itemize}

\subsection{Topological boundaries}
\label{sec:bimodboundary}

We now show that objects of $2[\cat C]$ have topological boundaries in the sense of Definition~\ref{def:topologicalboundary}.
\begin{definition}
  The \emph{boundaries} of a special dagger Frobenius algebra $(A, \tinymult[white dot], \tinyunit[white dot])$ in \cat{C} are canonical bimodules $(A,\tinymult[white dot],\tinyunit[white dot]) \xto {\mathbf L} (I,\lambda_I,\id[I])$ and $(I,\lambda_I, \id[I]) \xto {\mathbf R} (A, \tinymult[white dot],\tinyunit[white dot])$ induced by the multiplication map $\tinymult[white dot]$:
\begin{align}
\begin{aligned}
\begin{tikzpicture}[yscale=\ys]
\node (m) [morphism, minimum width=10mm] at (0,0) {$\mathbf{L}$};
\draw [dashed] (m.-45) to (m.-45)+(0,-0.75);
\draw (m.-90) to (m.-90)+(0,-0.75);
\draw (m.-135) to (m.-135)+(0,-0.75);
\draw (m.90) to (m.90)+(0,0.75);
\end{tikzpicture}
\end{aligned}
\quad&:=\quad
\begin{aligned}
\begin{tikzpicture}[yscale=\ys]
\draw (0,-1.05) to (0,-0.5) to [out=up, in=0] (-0.25,0) to (-0.25,1.05);
\draw (-0.5,-1.05) to (-0.5,-0.5) to [out=up, in=180] (-0.25,0) node [whitedot] {};
\draw [dashed] (0.5,-1.05) to [out=up, in=0] (0,-0.5);
\end{tikzpicture}
\end{aligned}
&
\begin{aligned}
\begin{tikzpicture}[yscale=\ys]
\node (m) [morphism, minimum width=10mm] at (0,0) {$\mathbf{R}$};
\draw (m.-45) to (m.-45)+(0,-0.75);
\draw (m.-90) to (m.-90)+(0,-0.75);
\draw [dashed] (m.-135) to (m.-135)+(0,-0.75);
\draw (m.90) to (m.90)+(0,0.75);
\end{tikzpicture}
\end{aligned}
\quad&:=\quad
\begin{aligned}
\begin{tikzpicture}[yscale=\ys]
\draw (0,-1.05) to (0,-0.5) to [out=90, in=180] (0.25,0) to (0.25,1.05);
\draw (0.5,-1.05) to (0.5,-0.5) to [out=90, in=0] (0.25,0) node [whitedot] {};
\draw [dashed] (-0.5,-1.05) to [out=90, in=180] (0,-0.5);
\end{tikzpicture}
\end{aligned}
\end{align}
The dashed lines indicate the monoidal unit object; we will typically omit these from now on. These boundaries are depicted as lines that bound solid regions, as shown in the diagrams~\eqref{eq:firsttopboundary}.
\end{definition}
\begin{lemma}
\label{lem:rlcomposite}
For a special dagger Frobenius algebra $(A, \tinymult[white dot], \tinyunit[white dot])$, the composite bimodule $$(I,\lambda_I,\id[I]) \xto {\mathbf R} (A,\tinymult[white dot],\tinyunit[white dot]) \xto{\mathbf L} (I,\lambda_I,\id[I])$$ is isomorphic to the object $A$.
\end{lemma}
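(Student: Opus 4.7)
The composite $\mathbf{R} \tinydot[gray dot] \mathbf{L}$ is, by definition, the image of the dagger idempotent $p$ of~\eqref{eq:frobeniusimage} instantiated at $\mathbf{M} = \mathbf{R}$ and $\mathbf{N} = \mathbf{L}$. Since both outer algebras in this composition are the trivial one on the monoidal unit $I$, the ``$C$'' and ``$E$'' unit dots in that diagram disappear, and $p \colon A \otimes A \to A \otimes A$ collapses to the morphism that feeds a $\Delta \circ \eta \colon I \to A \otimes A$ between the two input strands and multiplies with $\mu = \tinymult[white dot]$ on each side, where $\eta = \tinyunit[white dot]$ and $\Delta = \mu^\dag$.

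My plan has two steps. First, I would identify $p$ with the single morphism $\Delta \circ \mu$. Graphically this is a one-move application of the Frobenius law from~\eqref{eq:frobenius}, in the form $(\id \otimes \mu) \circ (\Delta \otimes \id) = \Delta \circ \mu$, together with the unit law $\mu \circ (\id \otimes \eta) = \id$. This will be the main graphical calculation, but it is routine. Second, I would exhibit the splitting directly: let $i := \Delta \colon A \to A \otimes A$. Then $i i^\dag = \Delta \circ \mu = p$, and $i^\dag i = \mu \circ \Delta = \id[A]$ by the specialness condition. Hence $i$ is an image of $p$, so the underlying object of $\mathbf{R} \tinydot[gray dot] \mathbf{L}$ is $A$.

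Because both outer algebras in the composite are trivial, the induced $I$-$I$-bimodule structure from~\eqref{eq:bimodule-tensor} carries no further data, so $\mathbf{R} \tinydot[gray dot] \mathbf{L}$ is just the object $A$ of $\cat{C}$, as claimed. The only real obstacle is getting the Frobenius manipulation right in the first step, but since it boils down to a single application of the Frobenius and unit laws, I expect no deeper difficulty.
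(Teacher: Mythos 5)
Your proposal is correct and takes essentially the same route as the paper's proof: both simplify the composite idempotent to the comultiplication following the multiplication ($\Delta \circ \mu$) via the Frobenius and unit laws, and then use specialness to exhibit the dagger splitting through $A$ with $i = \Delta$.
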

\begin{proof}
By Lemma~\ref{lem:coequalizer}, we must find the dagger splitting of the left-hand diagram below:
\begin{equation}
\begin{aligned}
\begin{tikzpicture}[yscale=\ys]
\draw (0,0) to (0,1) to [out=up, in=180] (0.25,1.25) node [whitedot] {} to (0.25,1.75);
\draw (0.25,1.25) to [out=0, in=180] (0.75,0.8) node [whitedot] {} to [out=0, in=180] (1.25,1.25) node [whitedot] {};
\draw (0.75,0.75) to (.75,.45) node [whitedot] {};
\draw (1.25,1.25) to [out=0, in=up] (1.5,1) to (1.5,0);
\draw (1.25,1.25) to (1.25,1.75);
\end{tikzpicture}
\end{aligned}
\quad=\quad
\begin{aligned}
\begin{tikzpicture}[yscale=\ys]
\draw (0,0) to [out=up, in=180] (0.5,0.5) node [whitedot] {} to (0.5,1.25) node [whitedot] {} to [out=180, in=down] (0,1.75);
\draw (0.5,0.5) to [out=0, in=up] (1,0);
\draw (0.5,1.25) to [out=0, in=down] (1,1.75);
\end{tikzpicture}
\end{aligned}
\end{equation}
By the dagger Frobenius axioms it equals the right-hand diagram. But the dagger specialness axiom makes this a dagger splitting via the object $A$, so the object $A$ gives the composite of the bimodules.
\end{proof}

\begin{lemma}
  The boundaries of a special dagger Frobenius monoid $(A, \tinymult[white dot], \tinyunit[white dot])$ in \cat C can be equipped with data~\eqref{eq:secondtopboundary} and \eqref{eq:lasttopboundary} satisfying equations~\eqref{eq:top1} and \eqref{eq:top2} as follows:
\begin{calign}
\begin{aligned}
\begin{tikzpicture}
    \draw [fill=\fillClight, draw=none] (0.2,-0.5)
        to (0.5,-0.5)
        to [out=down,in=150] (1,-1)
        to [out=30, in=down] (1.5,-0.5)
        to (1.8,-0.5)
        to (1.8,-1.5)
        to (0.2,-1.5)
        to (0.2,-0.5);
    \draw [thick] (0.5,-0.5) node [above] {$\mathbf L\vphantom{{}_A}$}
        to [out=down, in=150] (1,-1)
        to [out=30, in=down] (1.5,-0.5) node [above] {$\mathbf R\vphantom{{}_A}$};
\draw [thick] (1,-1) node [whitedot] {} to (1,-1.5) node [below] {$\id [A]\vphantom{\mathbf L}$};
\end{tikzpicture}
\end{aligned}
&
\begin{aligned}
\begin{tikzpicture}[yscale=-1]
    \draw [fill=\fillClight, draw=none] (0.2,-0.5)
        to (0.5,-0.5)
        to [out=down,in=150] (1,-1)
        to [out=30, in=down] (1.5,-0.5)
        to (1.8,-0.5)
        to (1.8,-1.5)
        to (0.2,-1.5)
        to (0.2,-0.5);
    \draw [thick] (0.5,-0.5) node [below] {$\mathbf L\vphantom{{}_A}$}
        to [out=down, in=150] (1,-1)
        to [out=30, in=down] (1.5,-0.5) node [below] {$\mathbf R$};
\draw [thick] (1,-1) node [whitedot] {} to (1,-1.5) node [above] {$\id [A]\vphantom{\mathbf L}$};
\end{tikzpicture}
\end{aligned}
&\begin{aligned}
\begin{tikzpicture}
    \draw [thick, fill=\fillClight] (0.5,-0.5) node [above] {$\mathbf L\vphantom{{}_A}$}
        to [out=down, in=150] (1,-1)
        to [out=30, in=down] (1.5,-0.5) node [above] {$\mathbf R\vphantom{{}_A}$};
\draw [thick, dashed] (1,-1) node [whitedot] {} to (1,-1.5) node [below] {$\id [I]\vphantom{\mathbf L}$};
\end{tikzpicture}
\end{aligned}
&
\begin{aligned}
\begin{tikzpicture}[yscale=-1]
    \draw [thick, fill=\fillClight] (0.5,-0.5) node [below] {$\mathbf L\vphantom{{}_A}$}
        to [out=down, in=150] (1,-1)
        to [out=30, in=down] (1.5,-0.5) node [below] {$\mathbf R\vphantom{{}_A}$};
\draw [thick, dashed] (1,-1) node [whitedot] {} to (1,-1.5) node [above] {$\id [I]\vphantom{\mathbf L}$};
\end{tikzpicture}
\end{aligned}
\\
\nonumber
\tinycomult[white dot] \colon A \to A \otimes A
&
\tinymult[white dot] \colon A \otimes A \to A
&
\tinyunit[white dot] \colon I \to A
&
\tinycounit[white dot] \colon A \to I
\end{calign}
\end{lemma}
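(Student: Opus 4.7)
My plan is to interpret each of the four 2-morphisms as the corresponding Frobenius structure map on $A$ and then check that the boundary axioms reduce to the dagger special Frobenius axioms~\eqref{eq:frobenius}.

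First I would verify well-definedness of the 2-morphisms. For $\tinycomult[white dot]$ and $\tinymult[white dot]$, the bimodule composite $\mathbf L \tinydot[gray dot] \mathbf R$ has trivial middle algebra $I$, so the splitting required in~\eqref{eq:bimodule-tensor} is trivial and the underlying object of the composite is simply $A\otimes A$. The Frobenius comultiplication and multiplication then furnish candidate morphisms $A \to A\otimes A$ and $A\otimes A \to A$, and they are $A$-$A$-bimodule homomorphisms between the identity bimodule (Lemma~\ref{lem:identitybimodule}) and this composite: this is a direct consequence of~\eqref{eq:frobenius}. For $\tinyunit[white dot]$ and $\tinycounit[white dot]$ I would invoke Lemma~\ref{lem:rlcomposite} to identify $\mathbf R \tinydot[white dot] \mathbf L$ with the object $A$ via the explicit isomorphism produced in its proof, and then the Frobenius unit and counit give the required $I$-$I$-bimodule homomorphisms between $I$ and $A$.

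Second, I would check the topological axioms diagrammatically. The snake equations in~\eqref{eq:top1} unfold, after substituting the definitions of $\mathbf L$, $\mathbf R$ and the structure 2-morphisms, into string diagrams in $\cat C$ that collapse via the Frobenius unit and counit laws; the dagger specialness condition $\smash{\tinyhandle[white dot]} = \smash{\tinyid}$ is needed to turn a Frobenius loop arising from the splitting into a single strand. The bubble equation in~\eqref{eq:top2} unfolds directly into this same specialness condition. The second equation of~\eqref{eq:top2}, which compares two ways of merging three classical regions, unfolds into two iterated composites of multiplications and comultiplications on $A$ that agree by the Frobenius law~\eqref{eq:frobenius} together with (co)associativity and commutativity of the classical structure.

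The main obstacle I expect is bookkeeping. Every time a unit or counit 2-morphism appears inside a composite, the isomorphism $\mathbf R \tinydot[white dot] \mathbf L \cong A$ from Lemma~\ref{lem:rlcomposite} has to be threaded through the surrounding diagram, and whenever two bimodule composites meet, Lemma~\ref{lem:coequalizer} must be used to mediate between the splitting of~\eqref{eq:frobeniusimage} and the underlying morphisms in $\cat C$. Once this routine but intricate tracking is handled, each individual diagrammatic identity amounts to a short manipulation using only the dagger special Frobenius axioms.
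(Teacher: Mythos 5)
Your proposal is correct and follows essentially the same route as the paper, whose proof simply states that equations~\eqref{eq:top1} follow from the Frobenius (co)unit law and equations~\eqref{eq:top2} from specialness and commutativity, while delegating the well-definedness of the four 2\-morphisms to Lemmas~\ref{lem:identitybimodule} and~\ref{lem:rlcomposite} exactly as you do. Your version merely spells out the bookkeeping (trivial splitting over $I$, nontrivial splitting over $A$, the bimodule-homomorphism checks via the Frobenius law) that the paper leaves implicit.
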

\noindent
Note that we are relying on Lemmas~\ref{lem:identitybimodule} and~\ref{lem:rlcomposite} for these definitions to make sense.
\begin{proof}
  Equations~\eqref{eq:top1} follow immediately from the (co)unit equation for a dagger Frobenius algebra. Equations~\eqref{eq:top2} follow immediately from specialness and commutativity.
\end{proof}

\section{The case of Hilbert spaces}
\label{sec:2cpshilb}

This section discusses $\Two[\CPs[\cat{FHilb}]]$. 
We show that a substantial portion is well-defined, which we characterize in concrete terms: it consists of natural numbers, matrices of finite-dimensional C*-algebras, and matrices of completely positive maps. Thus this is completely analogous to the case of $\Two[\cat{FHilb}]$, which is equivalent to the 2\-category of 2\-Hilbert spaces that consists of natural numbers, matrices of Hilbert spaces, and matrices of linear maps~\cite{baez:twohilbertspaces,wester:mscthesis}. The difficulty of establishing that $2[\CPs[\cat{FHilb}]]$ is well-defined in general arises because $\CPs[\cat{FHilb}]$ does not have good completeness properties.

\begin{lemma}
  Not all coequalizers in the category $\CPs[\cat{FHilb}]$ are split epimorphisms.
\end{lemma}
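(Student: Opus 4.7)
The plan is to exhibit an explicit coequalizer diagram in $\CPs[\cat{FHilb}]$ that admits no completely positive right inverse. In outline: pick a small pair of parallel CP maps $f,g\colon A\to B$ between finite-dimensional C*-algebras, compute their coequalizer $q\colon B\to C$ by solving the universal problem by hand (the category is semisimple enough that for small enough $A, B$ one can enumerate all candidate CP maps out of $B$ that coequalize $f$ and $g$ and identify the universal one), and then check that no CP map $s\colon C\to B$ can satisfy $q\circ s=\id[C]$.

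Carrying this out in order: first verify the claimed $q$ is CP and coequalizes $f$ and $g$; second, verify the universal property by checking that any CP $h\colon B\to D$ with $hf = hg$ factors through $q$ via a unique CP mediating map (this uses that in $\cat{FHilb}$ the coequalizer is a linear quotient, so uniqueness and existence of the mediator at the level of linear maps is automatic, and one only needs to track complete positivity); third, show nonexistence of a CP section by parametrizing CP maps $C\to B$ in terms of their Choi matrices or Kraus operators and deriving a contradiction with the equation $qs = \id[C]$ from positivity constraints. Since $\CPs[\cat{FHilb}]$ is a dagger category, it suffices to rule out any section, not just a dagger one.

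The main obstacle is choosing the example so that the nonexistence of $s$ is forced. Many natural candidates (partial traces, diagonal-pinching idempotents, inclusion–projection pairs of a direct summand, preparation-then-measurement channels) turn out to be split, since such $q$ identify $C$ with a CP-retract of $B$. The example therefore has to pick a coequalizer whose target carries a C*-algebra structure that is \emph{not} realizable as a CP-retract of $B$: concretely, one where $q$ is forced to send the unit of $B$ to a non-invertible positive element of $C$, so that any CP splitting would have to witness the unit of $C$ as a CP image in $B$ with a specific normalization that the Choi-positivity of $s$ rules out. Given such an example, the lemma follows.
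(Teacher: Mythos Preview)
Your proposal is a plan rather than a proof: no explicit pair $f,g$ is given, the coequalizer is never computed, and the obstruction mechanism is only sketched. That by itself is a gap, but there are two more specific problems worth flagging.

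First, your step ``compute the coequalizer by hand'' is the hard part, and the remark that the category is ``semisimple enough'' does not help: $\CPs[\cat{FHilb}]$ is not abelian, and the linear-map coequalizer in $\cat{FHilb}$ need not be a coequalizer in $\CPs[\cat{FHilb}]$ (the mediating map out of the linear quotient can fail to be completely positive). So one cannot simply take the $\cat{FHilb}$ coequalizer and check CP of the mediator; you would have to search over all C*-algebra targets, and it is not clear how your plan handles that. The paper sidesteps this entirely: working dually with equalizers, it never identifies the equalizer $e\colon A\to\C^4$ of $f=(1\;1\;0\;0)$ and $g=(0\;0\;1\;1)$. Instead it shows that any such $e$ is injective (an argument using self-adjoint decompositions and the monic property), so $\dim A\le 3$ and $A$ is commutative.

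Second, your proposed obstruction---``$q$ sends the unit of $B$ to a non-invertible element of $C$, so Choi-positivity of $s$ rules out $qs=\id[C]$''---is not an obstruction as stated. A CP section $s$ need not be unital, so $qs=\id[C]$ only says $q$ takes \emph{some} positive element $s(1_C)$ to $1_C$; nothing forces that element to be $1_B$. The paper's obstruction is instead a convexity/dimension count: it exhibits four positive vectors $x_1,\ldots,x_4\in\C^4$ equalizing $f,g$, none a nonnegative combination of the others; a retraction $d$ would send them to four positive elements of the commutative algebra $A$ with the same independence property, impossible when $\dim A\le 3$. That combinatorial idea is what is missing from your outline.
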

\begin{proof}
  Due to the dagger we may equivalently show that not all equalizers split. Suppose the completely positive maps $f=(\begin{smallmatrix} 1 & 1 & 0 & 0\end{smallmatrix})$ and $g=(\begin{smallmatrix} 0 & 0 & 1 & 1 \end{smallmatrix}) \colon \C^4 \to \C$ had an equalizer $e \colon A \to \C^4$ in $\CPs[\cat{FHilb}]$. Then $fe=ge$, so $e$ factors through the equalizer of $f$ and $g$ in $\cat{FHilb}$:
  \[\begin{pic}[xscale=3,yscale=1.5]
    \node (l) at (0,0) {$\C^3$};
    \node (m) at (1,0) {$\C^4$};
    \node (r) at (2,0) {$\C$};
    \node (b) at (0,-1) {$A$};
    \draw[->] (b) to node[right, below]{$e$} (m);
    \draw[->] ([yshift=2pt]m.east) to node[above]{$f$} ([yshift=2pt]r.west);
    \draw[->] ([yshift=-2pt]m.east) to node[below]{$g$} ([yshift=-2pt]r.west);
    \draw[->, dashed] (b) to node[left]{$m$} (l);
    \draw[->] (l) to node[above]{$\left(\begin{smallmatrix} -1 & 1 & 1 \\ 1 & 0 & 0 \\ 0 & 1 & 0 \\ 0 & 0 & 1 \end{smallmatrix}\right)$} (m);
  \end{pic}\]
  We show below that the function $e$ is injective\footnote{We thank Narutaka Ozawa for this observation.}. Then the linear map $m$ is injective, and so $\dim(A) \leq 3$. It follows that $A$ must be a commutative C*-algebra (as 2-by-2 matrices already have dimension 4).

  Suppose $e(a)=0$ with $a=x+iy$ for self-adjoint $x,y \in A$. Then $e(x)=e(y)=0$ because positive maps preserve adjoints~\cite[page~2]{stormer:positive}. Say $x=x_+-x_-$ for positive $x_+,x_- \in A$; then $e(x_+)=e(x_-)$. But the completely positive maps $h_\pm \colon \C \to A$ defined by $h_\pm(1)=x_\pm$ satisfy $eh_+=eh_-$. So $h_+=h_-$ since $e$ is monic, whence $x=0$. Similarly $y=0$. So $\ker(e)=\{0\}$, and $e$ is injective.

  On the other hand, there are at least four completely positive maps $\C \to \C^4$, given by $x_1=(1,0,1,0)$, $x_2=(1,1,0,0)$, $x_3=(0,1,0,1)$, $x_4=(0,0,1,1)$, which satisfy $fx_i=gx_i$. No $x_i$ is a linear combination of the others with nonnegative coefficients. 
  If $d$ is a retraction of $e$, therefore none of $d x_i \in A$ is a linear combination of the others with nonnegative coefficients, as $e d x_i=x_i$. Moreover $d x_i \geq 0$ by completely positivity of $d$. But this contradicts $\dim(A)\leq 3$ as $A$ is commutative.
\end{proof}

It follows immediately that $\CPs[\cat{FHilb}]$ does not have dagger coequalizers.
The point is that there are nevertheless enough coequalizers for our purposes, as we show below.

\subsection{Analysis}

A subcollection of the objects in $\CPs[\cat{FHilb}]$ are classical structures $C=(A,\tinymult[white dot],\tinyunit[white dot])$ in \cat{FHilb}. Since the morphisms $\tinymult[white dot]$ and $\tinyunit[white dot]$ are completely positive with respect to this algebra structure, this also gives rise to an algebra $C' = (C,\tinymult[white dot],\tinyunit[white dot])$ in $\CPs[\cat{FHilb}]$. We call this a \emph{classical structure over itself}. Note that, up to isomorphism $C \cong \C^n$, such structures are just natural numbers. In this section, except for the last theorem, we restrict consideration to objects of $2[\CPs[\cat{FHilb}]]$ which are classical structures over themselves.

\begin{lemma}\label{lem:matrices-objects}
  There is a one-to-one correspondence between dagger bimodules on classical structures over themselves in $\CPs[\cat{FHilb}]$, and matrices of finite-dimensional C*-algebras.
\end{lemma}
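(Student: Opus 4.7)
The plan is to show that a bimodule on classical structures over themselves is essentially a rectangular matrix of C*-algebras, by exploiting the fact that the minimal idempotents of a commutative Frobenius algebra are central and therefore decompose any bimodule into blocks.

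Up to isomorphism, a classical structure over itself in $\CPs[\cat{FHilb}]$ corresponds to some $\C^n$ with its standard componentwise Frobenius structure $(\tinymult[white dot],\tinyunit[white dot])$; let $e_i \colon \C \to \C^n$ denote the $i$\-th basis vector, viewed as a CP map. These are the minimal idempotents: they satisfy $e_i^\dag e_j = \delta_{ij}\,\id$, they sum via the Frobenius unit to $\id[\C^n]$, and they factor through $\tinyunit[white dot]$ composed with $\tinymult[white dot]$. Given a dagger $C$\-$D$\-bimodule $\mathbf{M}$ with $C=\C^n$ and $D=\C^m$, I would define for each $(i,j)$ the endomorphism
\[
  p_{ij} \;:=\; \mathbf{M}\circ(e_i\otimes\id[M]\otimes e_j)\circ(e_i^\dag\otimes\id[M]\otimes e_j^\dag)\circ\mathbf{M}^\dag
\]
on $M$. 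The dagger bimodule axioms~\eqref{eq:dagger-bimodule} together with specialness of $\tinymult[white dot]$ and $\tinymult[black dot]$ ensure that each $p_{ij}$ is a dagger idempotent in $\CPs[\cat{FHilb}]$, and that the family $\{p_{ij}\}$ is pairwise orthogonal with $\sum_{i,j}p_{ij}=\id[M]$.

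The next step is to show that each $p_{ij}$ actually splits in $\CPs[\cat{FHilb}]$. This is where the main obstacle lies, given the failure of dagger coequalizers established just above. The point is that $p_{ij}$ is not an arbitrary CP idempotent: reading $\mathbf M$ as equipping the C*\-algebra $M$ with commuting unital $\ast$\-homomorphisms from $\C^n$ and $\C^m$, the element $z_{ij}:=\mathbf M(e_i\otimes 1_M\otimes e_j)$ is a central projection in $M$, and $p_{ij}$ is the compression $a\mapsto z_{ij}\,a\,z_{ij}$. The image $M_{ij}:=z_{ij}M$ is therefore a finite-dimensional C*\-algebra sitting inside $M$, and the inclusion $M_{ij}\hookrightarrow M$ exhibits a dagger splitting of $p_{ij}$ in $\CPs[\cat{FHilb}]$. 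Orthogonality and completeness of the family give a dagger biproduct decomposition $M\cong\bigoplus_{i,j}M_{ij}$, and the bimodule action is completely recovered from this decomposition since the left $C$\-action picks out the row index and the right $D$\-action picks out the column index.

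In the other direction, given any $n\times m$ matrix $(M_{ij})$ of finite-dimensional C*\-algebras, I would set $M:=\bigoplus_{i,j}M_{ij}$ and define the action $\mathbf{M}$ block-diagonally, with $e_i\otimes a\otimes e_j$ acting as the projection-then-multiplication sending $a\in M_{kl}$ to $a$ if $(k,l)=(i,j)$ and to $0$ otherwise. Complete positivity, the bimodule axioms, and the dagger conditions~\eqref{eq:dagger-bimodule} can then be checked blockwise; they all reduce to the fact that each $M_{ij}$ is a C*\-algebra. A direct comparison shows that the two constructions are mutually inverse up to bimodule isomorphism, yielding the claimed bijection.

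The hard part is genuinely the splitting step: the previous lemma shows we cannot hope for arbitrary idempotents to split, so the argument must use the specific central nature of the $p_{ij}$, which is precisely what the commutativity of the classical structures $C$ and $D$ provides. Everything else is book-keeping with block decompositions of C*\-algebras.
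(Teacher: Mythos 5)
Your overall architecture matches the paper's (extract idempotents $p_{ij}$ from the action, split them to get a matrix of C*-algebras, and reverse via direct sums with projection actions), but the step you yourself identify as the hard one contains a genuine gap. You justify the splitting by ``reading $\mathbf M$ as equipping the C*-algebra $M$ with commuting unital $*$-homomorphisms from $\C^n$ and $\C^m$,'' from which centrality of $z_{ij}$ and the compression form of $p_{ij}$ would follow. But in $\CPs[\cat{FHilb}]$ a dagger bimodule action is only required to be a \emph{completely positive map} satisfying \eqref{eq:dagger-bimodule}; it is not given as a $*$-homomorphism, so this ``reading'' assumes exactly what needs to be proved. The claim is in fact plausible and can be rescued: since $\sum_{i,j} p_{ij} = \id[M]$ with every $p_{ij}$ completely positive, each $p_{ij}$ is dominated by the identity channel in the CP order, and an Arveson--Radon--Nikodym argument then forces $p_{ij}$ to be multiplication by an element of the centre $Z(M)$, which idempotency upgrades to a central projection. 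But that argument (or a citation for it) is the entire content of the step, and it is absent.

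The paper takes a different and cheaper route here: it does not establish centrality at all, but instead invokes the Choi--Effros theorem, by which the image of \emph{any} completely positive dagger idempotent on a finite-dimensional C*-algebra is again a finite-dimensional C*-algebra (under the induced product $x \circ y = p(xy)$). This immediately yields the matrix $(M_{ij})$ without analysing the fine structure of $p_{ij}$, at the price of the $M_{ij}$ being a priori only Choi--Effros algebras rather than direct summands of $M$. Your approach, once the centrality claim is actually proved, buys something the paper's argument glosses over: it makes the two constructions visibly mutually inverse, since $M$ genuinely decomposes as $\bigoplus_{i,j} M_{ij}$ with the projection action. So the route is a legitimate and arguably stronger alternative, but as written the proof is incomplete at its crux.
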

\begin{proof}
  Let $\mathbf{M} \colon \C^m \otimes M \otimes \C^n \to M$ be a dagger $\C^m$-$\C^n$-bimodule in $\CPs[\cat{FHilb}]$, between classical structures over themselves. Then $M$ is a finite-dimensional C*-algebra, and $\mathbf{M}$ is a completely positive map. By \ref{thm:groupoid}, the units of $\C^m$ and $\C^n$ are given by the sum over the standard basis vectors $\ket{i}$ and $\ket{j}$ of $\C^m$ and $\C^n$, respectively. Set $p_{ij} = (\ket{i}\bra{i}) \otimes \id[M] \otimes (\ket{j}\bra{j})$; this is a completely positive dagger idempotent. Hence its image $M_{ij}=p_{ij}(M)$ is a finite-dimensional C*-algebra by a classic theorem of Choi and Effros~\cite{choieffros:injectivity}; see~\cite[Proposition~2.4]{heunenkissingerselinger:cpproj}. Thus the bimodule $\mathbf{M}$ gives rise to a matrix $(M_{ij})$ of finite-dimensional C*-algebras.

  Conversely, let $(M_{ij})$ be an $m$-by-$n$ matrix of finite-dimensional C*-algebras. Set $M=\bigoplus_{i,j} M_{ij}$, and define $\mathbf{M} \colon \C^m \otimes M \otimes C^n \to M$ by mapping $\ket{i} \otimes a \otimes \ket{j}$ to $1_{ij} \cdot a$, where $1_{ij}$ is the unit of $M_{ij}$. In other words, $\mathbf{M}(\ket{i} \otimes a \otimes \ket{j})$ is the projection of $a \in A$ onto the summand $M_{ij}$. This is a $*$\-homomorphism, and hence a completely positive map~\cite[Lemma~3.8]{coeckeheunenkissinger:cpstar}. To verify that it is a bimodule, we need to check equation~\eqref{eq:dagger-bimodule}. The first two equalities are easily verified, the third equality uses that $\mathbf{M}^\dag \colon M \to \C^m \otimes M \otimes \C^n$ maps $b \in M$ to $\sum_{i,j} \ket{i} \otimes (1_{ij}b) \otimes \ket{j}$.
Hence these two constructions, which are inverse to each other, are well-defined.
\end{proof}

It follows that an important part of $\Two[\CPs[\cat{FHilb}]]$ is well-defined, which will be sufficient for our applications in Section~\ref{sec:applications} to quantum information and encryption.
\begin{proposition}\label{cor:restriction}
  Let $(C,\tinymult[white dot],\tinyunit[white dot])$, $(D,\tinymult[gray dot],\tinyunit[gray dot])$, $(E,\tinymult[black dot],\tinyunit[black dot])$ be classical structures over themselves in $\CPs[\cat{FHilb}]$, and let $\mathbf{M}$ and $\mathbf{N}$ be a $C$-$D$-bimodule and a $D$-$E$-bimodule. The idempotent~\eqref{eq:frobeniusimage} splits.
\end{proposition}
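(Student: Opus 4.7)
The plan is to make the idempotent~\eqref{eq:frobeniusimage} fully concrete using Lemma~\ref{lem:matrices-objects}, identify its image as a finite-dimensional C*-algebra, and then invoke the Choi--Effros theorem, exactly as in the proof of Lemma~\ref{lem:matrices-objects}, to obtain a dagger splitting inside $\CPs[\cat{FHilb}]$.

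First I would write $C \cong \C^m$, $D \cong \C^n$, $E \cong \C^p$ and use Lemma~\ref{lem:matrices-objects} to identify $\mathbf{M}$ and $\mathbf{N}$ with matrices $(M_{ij})_{i\le m,\, j \le n}$ and $(N_{jk})_{j \le n,\, k \le p}$ of finite-dimensional C*-algebras, so that $M \otimes N \cong \bigoplus_{i,j,j',k} M_{ij} \otimes N_{j'k}$. Unfolding the picture~\eqref{eq:frobeniusimage} then shows that the idempotent sends $a \otimes b \in M\otimes N$ to
\[
  \sum_{i,j,k} (1_{ij}\cdot a)\otimes (1_{jk}\cdot b),
\]
because the unit of $C$ is $\sum_i \ket{i}$, the unit of $E$ is $\sum_k \ket{k}$, and $\tinycomult[gray dot]\,\tinyunit[gray dot] = \sum_j \ket{j}\otimes\ket{j}$ for the classical structure $D = \C^n$. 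Since $1_{ij}\cdot a$ is by definition the $(i,j)$-component of $a$ in the direct-sum decomposition (Lemma~\ref{lem:matrices-objects}), this is nothing but the orthogonal projection of $M \otimes N$ onto the ``diagonal'' subalgebra $P := \bigoplus_{i,j,k} M_{ij} \otimes N_{jk}$.

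Next I would observe that this projection is completely positive, being built from $\mathbf{M}$, $\mathbf{N}$, and the comultiplication of the commutative algebra $D = \C^n$, which is a $*$-homomorphism and hence CP by~\cite[Lemma~3.8]{coeckeheunenkissinger:cpstar}. The subspace $P$ is manifestly a finite-dimensional C*-algebra under the inherited product, being a direct sum of tensor products of finite-dimensional C*-algebras. The Choi--Effros theorem~\cite{choieffros:injectivity}, applied exactly as in the proof of Lemma~\ref{lem:matrices-objects} (cf.~\cite[Proposition~2.4]{heunenkissingerselinger:cpproj}), then provides a dagger splitting inside $\CPs[\cat{FHilb}]$: the orthogonal inclusion $i\colon P \hookrightarrow M \otimes N$ and its adjoint are both completely positive, with $i^\dag i = \id[P]$ and $i i^\dag$ equal to the idempotent above.

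The main obstacle will be the matrix-level bookkeeping in the second paragraph, in particular tracking how the comultiplication-of-unit on $D$ conspires with the two bimodule actions to collapse onto the diagonal $j = j'$; once the idempotent is written in this explicit form, everything else reduces cleanly to the Choi--Effros argument already used to prove Lemma~\ref{lem:matrices-objects}.
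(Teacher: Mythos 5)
Your proposal is correct and takes essentially the same route as the paper: both compute the idempotent explicitly on the standard bases as $a \otimes b \mapsto \sum_{i,j,k} \mathbf{M}(\ket{i}\otimes a\otimes\ket{j}) \otimes \mathbf{N}(\ket{j}\otimes b\otimes\ket{k})$, recognize it as an orthogonal projection (onto the diagonal summand $\bigoplus_{i,j,k} M_{ij}\otimes N_{jk}$ in your notation), and then invoke the Choi--Effros argument from Lemma~\ref{lem:matrices-objects} to obtain the dagger splitting in $\CPs[\cat{FHilb}]$. Your version is slightly more explicit about identifying the image, but the substance is identical.
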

\begin{proof}
  Write $\ket{i}$, $\ket{j}$, $\ket{k}$ for the standard bases of $\C^l$, $\C^m$, $\C^n$. Let $\mathbf{M}$ be a dagger $\C^l$-$\C^m$-bimodule, and let $\mathbf{N}$ be a dagger $\C^m$-$\C^n$-bimodule in $\CPs[\cat{FHilb}]$. Then~\eqref{eq:frobeniusimage} maps $m \otimes n$ to $\sum_{i,j,k} \mathbf{M}(\ket{i} \otimes m \otimes \ket{j}) \otimes \mathbf{N}(\ket{j} \otimes n \otimes \ket{k})$. This morphism is a sum of orthogonal projections, and hence a projection itself. As in the proof of Lemma~\ref{lem:matrices-objects}, this means that it has a well-defined dagger image in $\CPs[\cat{FHilb}]$.
  The proof is finished by noticing that any classical structure in $\cat{FHilb}$ is isomorphic to the commutative C*-algebra $\C^n$ for some $n$.
\end{proof}

\begin{lemma}\label{lem:matrices-morphisms}
There is a one-to-one correspondence between homomorphisms of dagger bimodules between classical structures over themselves in $\CPs[\cat{FHilb}]$, and matrices of completely positive maps between finite-dimensional C*-algebras.
\end{lemma}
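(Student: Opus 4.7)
The plan is to leverage the block decomposition $M = \bigoplus_{i,j} M_{ij}$ from Lemma~\ref{lem:matrices-objects}: the action there is $\mathbf{M}(\ket i \otimes a \otimes \ket j) = 1_{ij} \cdot a$, so for fixed basis vectors $\ket i \in \C^m$ and $\ket j \in \C^n$ the endomorphism $p_{ij} \colon M \to M$ defined by $a \mapsto \mathbf{M}(\ket i \otimes a \otimes \ket j)$ is precisely the projection of $M$ onto the block summand $M_{ij}$; likewise $p'_{ij}$ projects $M'$ onto $M'_{ij}$. This observation is the workhorse of both directions of the bijection.

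For the forward direction, given a bimodule homomorphism $f \colon M \to M'$, I would apply its defining equation $f\mathbf{M} = \mathbf{M'}(\id \otimes f \otimes \id)$ to $\ket i \otimes a \otimes \ket j$ to obtain $f(p_{ij}(a)) = p'_{ij}(f(a))$. Hence $f$ commutes with the block projections, and so decomposes as a direct sum $f = \bigoplus_{i,j} f_{ij}$ with components $f_{ij} \colon M_{ij} \to M'_{ij}$. Each $f_{ij}$ is completely positive because $f_{ij} = p'_{ij} \circ f \circ \iota_{ij}$ is a composite of completely positive maps: $f$ by hypothesis, and the inclusion $\iota_{ij} \colon M_{ij} \hookrightarrow M$ together with the projection $p'_{ij}$ as $*$-homomorphisms between direct summands.

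For the converse, given an $m \times n$ matrix $(f_{ij})$ of completely positive maps between the appropriate C*-algebras, I would set $f := \bigoplus_{i,j} f_{ij} \colon M \to M'$, which is completely positive as a direct sum of such, and then verify the bimodule homomorphism condition directly on pure tensors $\ket i \otimes a \otimes \ket j$, using the explicit formula for the actions: both sides reduce to $f_{ij}(1_{ij}\cdot a)$. The two assignments are manifestly mutually inverse by construction. The main subtlety will be reconciling the two a priori distinct C*-structures on each $M_{ij}$---one as a direct summand of $M$, the other arising from Choi--Effros applied to the image of the dagger idempotent $p_{ij}$ in the sense of Lemma~\ref{lem:matrices-objects}---but these agree because in either case the multiplication and unit are those inherited from the ambient algebra.
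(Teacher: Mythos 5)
Your proposal is correct and follows essentially the same route as the paper: decompose $M$ and $M'$ into blocks $M_{ij}$ via the idempotents $p_{ij}$ from Lemma~\ref{lem:matrices-objects}, cut a bimodule homomorphism into a matrix of completely positive maps $f_{ij}$, and reassemble a matrix as a direct sum, checking the homomorphism condition on pure tensors exactly as in the paper's displayed computation. If anything, your explicit derivation of $f p_{ij} = p'_{ij} f$ from the homomorphism equation makes the paper's closing claim that the two constructions are mutually inverse more transparent.
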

\begin{proof}
  Let $f \colon \mathbf{M} \to \mathbf{N}$ be a homomorphism of dagger $\C^m$-$\C^n$-bimodules in $\CPs[\cat{FHilb}]$. Write $\ket{i}$ and $\ket{j}$ for the standard bases of $\C^m$ and $\C^n$. According to the proof of Lemma~\ref{lem:matrices-objects}, let $p_{ij} \colon M \to M_{ij}$ and $q_{ij} \colon N \to N_{ij}$ be the completely positive maps implementing the biproduct decompositions $M=\bigoplus_{i,j} M_{ij}$ and $N=\bigoplus_{i,j} N_{ij}$. Then $f_{ij} = q_{ij} f \smash{p_{ij}^\dag} \colon M_{ij} \to N_{ij}$ is an $m$-by-$n$ matrix of completely positive maps. Conversely, let $(f_{ij})$ be an $m$-by-$n$ matrix of completely positive maps $f_{ij} \colon M_{ij} \to N_{ij}$. According to Lemma~\ref{lem:matrices-objects} we have to find a map $f \colon M \to N$ for $M=\bigoplus_{i,j} M_{ij}$ and $N=\bigoplus_{i,j} N_{ij}$. Just take $f=\bigoplus_{ij} f_{ij}$; this is well-defined because $\CPs[\cat{C}]$ inherits biproducts from $\cat{C}$~\cite[Theorem~3.2]{heunenkissingerselinger:cpproj}. We have to verify that this is a well-defined homomorphism of dagger bimodules:
  \begin{align*}
    f \mathbf{M} ( \ket{i_0} \otimes a \otimes \ket{j_0})
    & = f(1_{i_0j_0}a)
    = \textstyle \bigoplus_{i,j} f_{ij} (1_{i_0j_0}a)
    = f_{i_0j_0}(1_{i_0j_0}a)
    \\&
    = \textstyle 1_{i_0j_0} \bigoplus_{i,j} f_{ij}(1_{ij}a)
    \\&
    = 1_{i_0j_0} f(a)
    = \mathbf{N}( \ket{i_0} \otimes f(a) \otimes \ket{j_0} )
  \end{align*}
These two constructions are clearly inverse to each other.
\end{proof}

We can now characterize a well-defined part of $\Two[\CPs[\cat{FHilb}]]$.
\begin{theorem}
\label{thm:welldefined}
  The following full sub-2\-category is well-defined within $\Two[\CPs[\cat{FHilb}]]$:
  \begin{itemize}
  \setlength\itemsep{0pt}
    \item objects are natural numbers $m$;
    \item 1\-morphisms $m \to n$ are $m$-by-$n$ matrices $(M_{ij})$ of finite-dimensional C*-algebras;
    \item 2\-morphisms $(M_{ij}) \to (N_{ij})$ are $m$-by-$n$ matrices $(f_{ij})$ of completely positive maps;
    \item horizontal composition of 1\-morphisms is given by $(\bigoplus_j M_{ij} \otimes N_{jk})$;
    \item horizontal composition of 2\-morphisms is given by $(\bigoplus_j f_{ij} \otimes g_{jk})$;
    \item vertical composition of 2\-morphisms is given by $(g_{ij} f_{ij})$.
  \end{itemize}
\end{theorem}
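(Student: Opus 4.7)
The theorem packages Lemmas~\ref{lem:matrices-objects} and~\ref{lem:matrices-morphisms} together with Proposition~\ref{cor:restriction} into a single statement that a full sub-2-category of $\Two[\CPs[\cat{FHilb}]]$ exists with the stated concrete description. My approach is to use these three results to check that each piece of 2-categorical data survives the restriction, and then verify that the given compositional formulas match the abstract ones from Proposition~\ref{prop:two}.

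First I would fix the objects. Up to isomorphism every classical structure in $\cat{FHilb}$ is $\C^m$ for some natural number $m$, so a natural number names a classical structure over itself. Lemma~\ref{lem:matrices-objects} then identifies 1-morphisms $m \to n$ with $m$-by-$n$ matrices of finite-dimensional C*-algebras, and Lemma~\ref{lem:matrices-morphisms} identifies 2-morphisms with $m$-by-$n$ matrices of completely positive maps. Proposition~\ref{cor:restriction} supplies exactly the splittings of the dagger Frobenius idempotents~\eqref{eq:frobeniusimage} required to define horizontal composition of 1-morphisms and, via Lemma~\ref{lem:coequalizer}, of 2-morphisms within this sub-2-category, so the horizontal composite in fact exists.

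Next I would verify the explicit formulas for the three compositions. Recall from the proof of Proposition~\ref{cor:restriction} that~\eqref{eq:frobeniusimage} acts on $M \otimes N$ as a sum of orthogonal projections onto $M_{ij} \otimes N_{j'k}$ with $j=j'$. Its image decomposes as the $(i,k)$-indexed matrix whose entries are $\bigoplus_j M_{ij}\otimes N_{jk}$, matching the stated formula. For horizontal composition of 2-morphisms, the universal property of Lemma~\ref{lem:coequalizer} applied componentwise yields $\bigoplus_j f_{ij}\otimes g_{jk}$. Vertical composition is the componentwise composite $(g_{ij}f_{ij})$ inherited directly from $\CPs[\cat{FHilb}]$. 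The identity 1-morphism on $m$ in Proposition~\ref{prop:two} is the bimodule \smash{$\tinymult[white dot]$} on $\C^m$; under the decomposition of Lemma~\ref{lem:matrices-objects} it becomes the diagonal matrix with $\C$ in each diagonal entry and $0$ elsewhere, which is the expected identity for matrix multiplication.

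The main obstacle is purely bookkeeping: aligning the biproduct decompositions supplied by Lemma~\ref{lem:matrices-objects} with the summations in the matrix formulas, and checking that the described data is closed under all operations. Once closure holds, the associator, unitor and interchange coherence axioms are inherited automatically from the ambient (partially defined) $\Two[\CPs[\cat{FHilb}]]$, so no new identity needs to be established. In short, once the required idempotents split within this sub-2-category, the remaining work is direct computation.
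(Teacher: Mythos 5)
Your proposal is correct and follows essentially the same route as the paper: the paper's proof likewise reduces the theorem to checking that the correspondences of Lemmas~\ref{lem:matrices-objects} and~\ref{lem:matrices-morphisms} transport the compositions of Proposition~\ref{prop:two} to the stated matrix formulas, computing that the image of the idempotent~\eqref{eq:frobeniusimage} is $\bigoplus_{i,j,k} M_{ij}\otimes N_{jk}$ and that horizontal composition of 2\-morphisms becomes $(\bigoplus_j f_{ij}\otimes g_{jk})$. Your additional remarks on the identity 1\-morphism and on inheriting coherence from the ambient structure are consistent with, though not spelled out in, the paper's argument.
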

\begin{proof}
  It suffices to show that the correspondences of Lemmas~\ref{lem:matrices-objects} and~\ref{lem:matrices-morphisms} turn the compositions of Proposition~\ref{prop:two} into the ones of the statement. Let $\mathbf{M}$ and $\mathbf{M'}$ be $\C^l$-$\C^m$-bimodules, and let $\mathbf{N}$ and $\mathbf{N'}$ be $\C^m$-$\C^n$-bimodules. These correspond to matrices of C*-algebras, where $M_{ij}$ is the image of $\ket{i}\bra{i} \otimes \id[M] \otimes \ket{j} \bra{j}$. Let $f \colon \mathbf{M} \to \mathbf{M'}$ and $g \colon \mathbf{N} \to \mathbf{N'}$ be bimodule homomorphisms. These correspond to matrices of completely positive maps $f_{ij} \colon M_{ij} \to M'_{ij}$ and $g_{ij} \colon N_{ij} \to N_{ij}'$. Now, by definition $M \tinydot[gray dot] N$ is the image of the map $\sum_{i,j,k} \mathbf{M}(\ket{i} \otimes [-] \otimes \ket{j}) \otimes \mathbf{N}(\ket{j} \otimes [-] \otimes \ket{k})$. But this is just $\bigoplus_{i,j,k} M_{ij} \otimes N_{jk}$. Similarly, horizontal composition of $f$ and $g$ corresponds to $(\bigoplus_j f_{ij} \otimes g_{jk})$. 
\end{proof}

In future work we would of course like to show that $\Two[\CPs[\cat{FHilb}]]$ is completely well-defined. The first task will be to characterize its objects up to isomorphism. We offer the following theorem, which generalizes~\cite[Corollary~3.10]{coeckeduncankissingerwang:mermin}, as evidence that this is a nontrivial question. Recall that a state $x \in C$ of a classical structure $(C,\tinymult[black dot],\tinyunit[black dot])$ in $\cat{FHilb}$ is \emph{copyable} when $\tinycomult[black dot](x) = x \otimes x$.

\begin{theorem}\label{thm:groupoid}
  Consider a classical structure $C$ in $\cat{FHilb}$ as an object of $\CPs[\cat{FHilb}]$.
  There is a one-to-one correspondence between dagger special Frobenius algebras on $C$ in $\CPs[\cat{FHilb}]$, and finite groupoids whose morphisms are the copyable states of $C$.
\end{theorem}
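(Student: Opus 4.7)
The plan is to exhibit the bijection explicitly, with the forward direction being a convolution construction and the reverse direction being an analysis of the completely positive coefficients.

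Forward direction (groupoid to Frobenius algebra): given a finite groupoid $G$ whose morphism set is identified with the set $X$ of copyable states of $C\cong\C^X$, I would define $\mathbf{m}\colon C\otimes C\to C$ by $\mathbf{m}(e_g\otimes e_h)=e_{gh}$ when $g,h$ are composable and $0$ otherwise, and $\mathbf{u}\colon I\to C$ by $\mathbf{u}(1)=\sum_{v\in\mathrm{Ob}(G)}e_{\mathrm{id}_v}$. Since $C\otimes C$ and $C$ are commutative C*-algebras and $\mathbf{m}$ has $\{0,1\}$-valued matrix in the copyable-state basis, $\mathbf{m}$ is positive, hence completely positive; likewise for $\mathbf{u}$. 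The dagger Frobenius law graphically unpacks into associativity of composition together with the existence of inverses in $G$; specialness unpacks into uniqueness of inverses; the unit laws unpack into $\mathbf{u}(1)$ being the sum of identity morphisms. These are exactly the groupoid axioms.

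Reverse direction (Frobenius algebra to groupoid): given $(\mathbf{m},\mathbf{u})$ on $C$ in $\CPs[\cat{FHilb}]$, I would first observe that because $C$ and $C\otimes C$ are commutative C*-algebras, the $\CPs[\cat{FHilb}]$-dagger restricted to CP maps between them coincides with the $\cat{FHilb}$-adjoint (the transpose of a nonnegative matrix is again nonnegative, hence CP). So $(\mathbf{m},\mathbf{u})$ is equivalently a dagger special Frobenius algebra on $C$ in $\cat{FHilb}$ whose multiplication and unit have nonnegative matrix entries in the copyable-state basis. Writing $\mathbf{m}(e_x\otimes e_y)=\sum_z c_{xy}^z\,e_z$ with $c_{xy}^z\geq 0$, specialness $\mathbf{m}\mathbf{m}^\dag=\mathrm{id}$ expands to $\sum_{x,y} c_{xy}^z c_{xy}^{z'}=\delta_{zz'}$, and nonnegativity then forces at most one $z$ with $c_{xy}^z\neq 0$ for each $(x,y)$, defining a partial binary operation $\mu\colon X\times X\rightharpoonup X$. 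The Frobenius identity supplies associativity and two-sided inverses for $\mu$, and the support of $\mathbf{u}(1)$ selects the identity morphisms, yielding the groupoid $G$ on morphism set $X$.

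The cleanest alternative I would seriously consider is to invoke the established correspondence between finite groupoids and dagger special Frobenius algebras in $\cat{FHilb}$ to produce a canonical basis of $C$ indexed by morphisms of some abstract groupoid, and then argue that CP-ness of $\mathbf{m}$ with respect to the classical-structure basis is equivalent to this canonical basis coinciding up to permutation with the copyable-state basis. The main obstacle in either route is this basis-matching step, which amounts to showing that a unitary change of basis that preserves nonnegativity of the multiplication matrix must be a permutation matrix, together with the normalization argument needed to conclude $c_{xy}^z\in\{0,1\}$ from the combination of specialness, the Frobenius law, and the unit laws. Once basis-matching is established, mutual inverseness of the two constructions is immediate: the convolution built from the extracted $\mu$ recovers $\mathbf{m}$, and the groupoid extracted from a convolution multiplication recovers $G$.
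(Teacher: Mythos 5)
Your overall route is essentially the same as the paper's: identify a dagger special Frobenius algebra on $C$ in $\CPs[\cat{FHilb}]$ with one in $\cat{FHilb}$ whose structure maps are positive, hence have nonnegative matrices in the copyable-state basis (using the fact that positivity and complete positivity coincide for maps between commutative C*-algebras), and then recognize the resulting $\{0,1\}$-valued multiplication as a special dagger Frobenius algebra in $\cat{Rel}$, i.e.\ a groupoid. Your observation that specialness plus nonnegativity forces at most one nonzero coefficient $c_{xy}^z$ for each pair $(x,y)$ is correct and is a genuine first step towards the key normalization. However, the step you yourself flag as ``the main obstacle''---concluding $c_{xy}^z\in\{0,1\}$---is a real gap in your write-up: partial-functionality together with $\sum_{x,y}(c_{xy}^z)^2=1$ does not by itself pin the nonzero coefficients to $1$, and you would need to actually run the argument combining the Frobenius and unit laws. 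The paper does not prove this from scratch either; it closes exactly this hole by citing Proposition~34 of Abramsky--Heunen (that a dagger Frobenius multiplication on $\C^n$ with nonnegative matrix entries must have entries in $\{0,1\}$), and then invokes the Heunen--Contreras--Cattaneo correspondence between such algebras in $\cat{Rel}$ and groupoids rather than re-deriving which axiom yields associativity, inverses, and single-valuedness. Your forward (convolution) direction is fine and matches that correspondence. Finally, the ``basis-matching'' worry in your alternative route is a non-issue in the main route: complete positivity is stated relative to the classical structure $C$, whose copyable states already single out the basis, so the groupoid's morphism set is fixed from the outset and there is no residual unitary freedom to quotient out---the paper needs only the mild remark that any classical structure is isomorphic to $\C^n$ with standard basis the copyable states.
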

\begin{proof}
  Let $(\C^n,\tinymult[white dot],\tinyunit[white dot])$ be a dagger special Frobenius algebra on $\C^n$ in $\CPs[\cat{FHilb}]$.
  That is, it is a dagger special Frobenius algebra in $\cat{FHilb}$---\textit{i.e.}\ a finite-dimensional C*-algebra~\cite{vicary:quantumalgebras}---satisfying the extra condition that $\tinymult[white dot]$ and $\tinyunit[white dot]$ are completely positive maps. Since they are maps between commutative C*-algebras, saying that $\tinymult[white dot]$ and $\tinyunit[white dot]$ are completely positive is the same as saying that they are linear maps that preserve positive elements~\cite[Theorem~1.2.4]{stormer:positive}.
  Write $\tinymult[white dot]$ and $\tinyunit[white dot]$ as a matrix using the standard basis $\ket{i}$ of $\C^n$. Then all matrix entries $\bra{i}\tinymult[white dot]\ket{jk}$ and $\bra{i}\tinyunit[white dot]\ket{1}$ are nonnegative real numbers, and conversely, if all the matrix entries are nonnegative, then the linear maps $\tinymult[white dot]$ and $\tinyunit[white dot]$ certainly preserve positive elements.
  Thus $(\C^n,\tinymult[white dot],\tinyunit[white dot])$ is a dagger special Frobenius algebra in $\CPs[\cat{FHilb}]$ if and only if it is a C*-algebra whose multiplication and unit have nonnegative matrix entries on the standard basis $\ket{i}$ of $\C^n$.

  But then, by~\cite[Proposition~34]{abramskyheunen:hstar}, the matrix entries of $\tinymult[white dot]$ must in fact be either $0$ or $1$ (see also~\cite[Section~5.2]{coeckeheunenkissinger:cpstar}.)
  So we may equally think of the matrix of $\tinymult[white dot]$ as a morphism in the category $\cat{Rel}$ of sets and relations, where it still is a special dagger Frobenius algebra. Hence it encodes the multiplication of a groupoid whose arrows are the row indices $\ket{i}$~\cite{heunencontrerascattaneo}. 
  As units for a monoid are unique, also the matrix of $\tinyunit[white dot]$ must take values in $\{0,1\}$, and encode the identities of the groupoid. Finally, any classical structure $C$ in $\cat{FHilb}$ is isomorphic to $\C^n$ for some $n$, with the standard basis of $\C^n$ corresponding to the copyable states of $C$. Similarly, a $*$-isomorphism between classical structures in $\cat{Rel}$ corresponds to an isomorphism of groupoids~\cite[Theorem~19]{heunencontrerascattaneo}.
\end{proof}

\noindent
We leave open the interesting question of whether isomorphism between these objects in $2[\CPs[\cat{FHilb}]]$ (so-called \emph{Morita equivalence}) corresponds to \emph{equivalence} of groupoids.

\section{Applications}
\label{sec:applications}

We now consider applications to quantum information of the well-defined part of $2[\CPs[\cat{FHilb}]]$ constructed in Theorem~\ref{thm:welldefined}. We give an abstract 2\-categorical definition of \textit{measurement}, and show it recovers the ordinary notion positive operator--valued measure. We then analyze the 2\-categorical equation for quantum teleportation, and show that it has solutions in our 2\-category given by both encrypted communication and quantum teleportation. We then give a proof of a security property, which applies simultaneously to both types of solution.

\subsection{Measurement}

Earlier work on the 2\-categorical syntax for pure-state quantum theory~\cite{vicary:higherquantumtheory} demonstrated that a projective quantum measurement corresponds to a \textit{unitary} 2\-morphism which converts a local system into an extended system. Since our measurements in general are mixed, unitarity is not appropriate; instead we impose a counit-preservation condition.
\newcommand\tinymatrixmult{\smash{\ensuremath{
\begin{tikzpicture}[xscale=0.5, thick, scale=0.27]
\draw [use as bounding box, draw=none] (0,0.06) rectangle (3,1);
\draw (0,0) to [out=up, in=down] (1,1);
\draw (1,0) to [out=up, in=up] (2,0);
\draw (3,0) to [out=up, in=down] (2,1);
\end{tikzpicture}}}}
\newcommand\tinymatrixunit{\smash{\ensuremath{
\begin{tikzpicture}[xscale=0.5, thick, scale=0.27]
\draw [use as bounding box, draw=none] (1,0.06) rectangle (2,1);
\draw (1,1) to [out=down, in=down, looseness=2] (2,1);
\end{tikzpicture}}}}

\begin{definition}\label{def:measurement}
In $2[\CPs[\cat{FHilb}]]$, a \emph{measurement} is a counit-preserving 2\-morphism of type:
\begin{equation}
\begin{aligned}
\begin{tikzpicture}[xscale=0.7, thick, yscale=\ys]
\draw [fill=\fillClight] (-1,2) to [out=down, in=150] (0,1) to [out=30, in=down] (1,2);
\draw (0,0) node [below] {} to (0,1) node [circle, inner sep=0pt, minimum width=0.6cm, draw, fill=white] {$\mu$};
\end{tikzpicture}
\end{aligned}
\end{equation}
\end{definition}

It is not ideal that we must modify the definition of a measurement in this way. The situation is analogous to the work in~\cite{stayvicary}, where measurements were required to be kernel-free. That requirement can be replaced with the more elegant unitarity condition~\cite{barvicary}. With further work we hope to show the same in the current setting, a task which is likely to require making use of a larger part of $2[\CPs[\cat{FHilb}]]$ than we have so-far shown to be well-defined.
However, Definition~\ref{def:measurement} elegantly captures precisely the desired notion, as we now show.

\begin{theorem}
Restricting to the part of $2[\CPs[\cat{FHilb}]]$ defined in Theorem~\ref{thm:welldefined}, measurements on matrix algebras are exactly positive operator-valued measures.
\end{theorem}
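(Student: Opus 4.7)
\emph{Proof proposal.} The plan is to peel off the 2-categorical structure around $\mu$ in Definition~\ref{def:measurement} and identify what remains with POVM data.

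First I would identify the source and target 1-morphisms of $\mu$. The source is a single pure quantum line, which in the well-defined sub-2-category of Theorem~\ref{thm:welldefined} corresponds to a matrix algebra $M_n(\C)$ viewed as a 1-morphism $1\to 1$ between copies of the trivial classical object $\C$. The target is the composite of the boundary bimodules $\mathbf{R}$ and $\mathbf{L}$ attached to some classical Frobenius structure $(\C^k,\tinymult[white dot],\tinyunit[white dot])$; by Lemma~\ref{lem:rlcomposite} this composite is again a 1-morphism $1\to 1$, with underlying C*-algebra $\C^k$. Theorem~\ref{thm:welldefined} then says that 2-morphisms between two such 1-morphisms are just single completely positive maps, so $\mu$ is a cp map $\mu\colon M_n(\C)\to\C^k$ in $\CPs[\cat{FHilb}]$.

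Second, I would invoke the standard correspondence between cp maps $M_n(\C)\to\C^k$ and $k$-tuples $(E_1,\dots,E_k)$ of positive operators in $M_n(\C)$, given by $\mu(\rho)=(\Tr(E_i\rho))_{i=1}^{k}$. Because the codomain is commutative, complete positivity of $\mu$ is equivalent to positivity of each of its components, and each positive functional on $M_n(\C)$ is represented by a unique positive $E_i$ via the trace pairing; this immediately yields a bijection between our class of cp maps and $k$-tuples of positive operators.

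Third, I would translate the counit-preservation condition into POVM completeness. The Frobenius counit of the target $\C^k$ is $\tinycounit[white dot]\colon\C^k\to\C$, namely summation $\vec{x}\mapsto\sum_i x_i$. The Frobenius counit of the source $M_n(\C)$ is (up to a specialness rescaling) the trace $\Tr\colon M_n(\C)\to\C$. The condition $\tinycounit[white dot]\circ\mu=\Tr$ therefore reads $\sum_{i=1}^{k}\Tr(E_i\rho)=\Tr(\rho)$ for every $\rho\in M_n(\C)$, which is equivalent to $\sum_{i=1}^{k}E_i=I_n$. This is precisely the defining normalization for a POVM with $k$ outcomes, and the two constructions above are clearly mutually inverse.

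The main obstacle I anticipate is keeping the specialness rescalings consistent across the source and target. The special dagger Frobenius structure on $M_n(\C)$ in $\CPs[\cat{FHilb}]$ differs from ordinary matrix multiplication by a factor depending on $n$, which rescales the counit accordingly; on the classical side $\C^k$ is already special with the pointwise product. I would have to track these scalings through the identifications of Theorem~\ref{thm:welldefined} and Lemma~\ref{lem:rlcomposite} to confirm that counit-preservation collapses to the clean completeness equation $\sum_i E_i=I_n$ rather than a rescaled variant.
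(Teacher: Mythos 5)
Your proposal is correct and follows essentially the same route as the paper: identify $\mu$ as a completely positive map from a matrix algebra to a classical structure, unpack it into a family of positive operators (the paper does this via the adjoint $\mu^\dag$ evaluated on the copyable states, you via the trace pairing --- the same correspondence), and read counit-preservation as the completeness condition $\sum_i E_i = \id$. Your concern about tracking the specialness normalisation of the Frobenius structure on $M_n(\C)$ is legitimate, but note that the paper's own proof does not address it either --- it simply declares the counit of the matrix algebra to be the trace map.
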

\begin{proof}
The 2\-morphism $\mu$ is a trace-preserving completely positive map from a matrix algebra to a classical structure. Its adjoint $\mu ^\dag$ is therefore a completely positive map out of a classical structure. Such a map is completely defined by its action on the $n$ copyable states of the classical structure, which must be sent to positive elements of $H \otimes H^*$. Thus $\mu ^\dag$ is defined by a family of $n$ positive operators $P_i : H \to H$.

The counit-preservation condition is given by the left-hand condition below:
\begin{equation}
\left(\,\,\,
\begin{aligned}
\begin{tikzpicture}[xscale=0.7, thick, yscale=\ys]
\draw [use as bounding box, draw=none] (-1,0) rectangle (1,2.6);
\draw [fill=\fillClight] (-1,2) to [out=down, in=150] (0,1) to [out=30, in=down] (1,2) to [out=up, in=up] (-1,2);
\draw (0,0) to (0,1) node [circle, inner sep=0pt, minimum width=0.6cm, draw, fill=white] {$\mu$};
\node at (0,1.9) {$\C ^n$};
\end{tikzpicture}
\end{aligned}
\quad=\quad
\begin{aligned}
\begin{tikzpicture}[xscale=0.7, thick, yscale=\ys]
\draw [use as bounding box, draw=none] (-0.5,0) rectangle (.5,2.6);
\draw (0,0) to (0,1) node [ground] {};
\end{tikzpicture}
\end{aligned}
\,\,\,
\right)
\qquad\stackrel{\dag}\Leftrightarrow\qquad
\left(\,\,\,
\begin{aligned}
\begin{tikzpicture}[xscale=0.7, thick, yscale=-1, yscale=\ys]
\draw [use as bounding box, draw=none] (-1,0) rectangle (1,2.6);
\draw [fill=\fillClight] (-1,2) to [out=down, in=150] (0,1) to [out=30, in=down] (1,2) to [out=up, in=up] (-1,2);
\draw (0,0) to (0,1) node [circle, inner sep=-4pt, minimum width=0.6cm, draw, fill=white] {$\mu ^\dag$};
\node at (0,1.9) {$\C ^n$};
\end{tikzpicture}
\end{aligned}
\quad=\quad
\begin{aligned}
\begin{tikzpicture}[xscale=0.7, thick, yscale=-1, yscale=\ys]
\draw [use as bounding box, draw=none] (-0.5,0) rectangle (.5,2.6);
\draw (0,0) to (0,1) node [ground, hflip] {};
\end{tikzpicture}
\end{aligned}
\,\,\,
\right)
\end{equation}
On the right-hand side we take the adjoint of this condition. We use the `earth' symbol to represent the counit of a matrix algebra, which is just the trace map, following previous work~\cite{coeckeperdrix}. The second equation says precisely that $\sum _i P_i = \id[H]$, which is exactly the condition for the family of positive operators $P_i$ to define a positive operator--valued measurement.
\end{proof}

\subsection{Unification of quantum teleportation and classical encrypted communication}

\begin{definition}
In a symmetric monoidal 2\-category containing an object with a topological boundary, \emph{teleportation} is a solution to the following equation with $\mu$ a measurement and $\nu$ unitary:
\begin{equation}
\label{eq:teleportation}
\begin{aligned}
\begin{tikzpicture}[thick, yscale=\ys]
\draw [use as bounding box, draw=none] (0,0) rectangle (2.5,3);
\draw (0,0) to [out=up, in=-150] (1,1)
  to [out=-30, in=left, in looseness=1] (2.0,0.3)
  to [out=right, in=down] (2.5,1)
  to [out=up, in=-30] (2,2)
  to [out=30, in=down] (2.5,3);
\draw[fill=\fillClight](0,3) to [out=down, in=150] (1,1)
  to [out=30, in=-150] (2,2)
  to [out=150, in=down] (1.5,3);
\node at (1,1) [circle, inner sep=0pt, minimum width=0.6cm, draw, fill=white] {$\mu$};
\node at (2,2) [circle, inner sep=0pt, minimum width=0.6cm, draw, fill=white] {$\nu$};
\end{tikzpicture}
\end{aligned}
\qquad=\qquad
\begin{aligned}
\begin{tikzpicture}[thick, yscale=\ys]
\draw [use as bounding box, draw=none] (0,0) rectangle (2.5,3);
\draw[fill=\fillClight](0,3) to (0,2.4)
  to [out=down, in=down, looseness=2] (1.5,2.4)
  to (1.5,3);
\draw (0,0) to [out=up, in=down, in looseness=2] (2.5,3);
\end{tikzpicture}
\end{aligned}
\end{equation}
\end{definition}

\noindent
Note that this definition relies on our earlier Definition~\ref{def:measurement} of a measurement.

\begin{theorem}
\label{thm:teleportationsolutions}
When the nontrivial region is labelled by a discrete groupoid, solutions to the teleportation equation in $2[\CPs[\cat{FHilb}]]$ can be obtained as follows:
\begin{enumerate}
\item when the incoming system is a classical structure, by implementations of classical encrypted communication using a one-time pad;
\item when the incoming system is a matrix algebra, by implementations of quantum teleportation.
\end{enumerate}
\end{theorem}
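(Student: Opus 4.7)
The plan is to exhibit, in both cases, explicit 2-morphisms $\mu$ and $\nu$ that implement the named protocol inside the concrete part of $\Two[\CPs[\cat{FHilb}]]$ described by Theorem~\ref{thm:welldefined}, and then verify equation~\eqref{eq:teleportation} by unfolding the bimodule composition via the idempotent splitting of Proposition~\ref{cor:restriction}. The first step is to translate the labelling: by Theorem~\ref{thm:groupoid} a discrete groupoid $G$ corresponds to a Frobenius algebra on $\C^{|G|}$ in $\CPs[\cat{FHilb}]$ whose multiplication and comultiplication are given by the partial group operation on $G$. The boundary bimodules $\mathbf{L}$ and $\mathbf{R}$ of this algebra then combine, under the horizontal composition of Theorem~\ref{thm:welldefined}, so that the right-hand side of~\eqref{eq:teleportation} becomes, depending on the incoming system, either a perfectly correlated classical pair indexed by $G$ (case 1) or a maximally entangled resource expanded in a basis labelled by $G$ (case 2).

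For case 1, I would take $\mu$ to be the 2-morphism that groupoid-multiplies the incoming classical message by the shared key, producing a classical outcome, and take $\nu$ to be the inverse multiplication conditioned on that outcome. Counit preservation of $\mu$, i.e.\ the measurement condition of Definition~\ref{def:measurement}, reduces to the fact that translation by a uniformly random group element is trace-preserving; the teleportation equation then reduces to the group-theoretic identity $(gk)k^{-1}=g$ together with the Frobenius law, both of which follow immediately from the axioms of $G$.

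For case 2, the incoming system is the matrix algebra $M_{|G|}(\C)$, and I would take $\mu$ to be the POVM whose effects are the projectors onto the unitary error basis $\{(U_g \otimes \id)\Omega\}_{g\in G}$ on the entangled resource, with $U_g$ the translation operators of the group, and take $\nu$ to be the Pauli-style correction $U_{g^{-1}}$ conditioned on classical outcome $g$. The counit condition on $\mu$ is the completeness of this unitary error basis, and the teleportation equation reduces to the standard identity $\tfrac{1}{|G|}\sum_g U_{g^{-1}} U_g \rho U_g^\dagger U_{g^{-1}}^\dagger = \rho$; this is immediate since $U_{g^{-1}}U_g=\id$, once the diagram has been flattened so that the conditional unitary acts on the correct tensor factor.

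The main obstacle will be the careful diagrammatic bookkeeping in $\Two[\CPs[\cat{FHilb}]]$: the cap, the measurement, and the unitary live in different horizontal compositions of bimodules, so turning~\eqref{eq:teleportation} into a concrete equation of completely positive maps requires unpacking the splitting of the idempotent~\eqref{eq:frobeniusimage} on both sides and tracking how the $D$-action on the middle wire mediates between the measurement and the correction. Once this unpacking is performed via Theorem~\ref{thm:welldefined} — so that 1-morphisms are matrices of C*-algebras and horizontal composition is the direct-sum/tensor formula $(\bigoplus_j M_{ij}\otimes N_{jk})$ — the remaining verification is the routine check above in each of the two cases.
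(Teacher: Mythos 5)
Your proposal is correct in substance but takes a genuinely different route from the paper. The paper's proof is explicitly only a sketch: it invokes the prior complete characterizations of one-time-pad encrypted communication as a solution of the teleportation equation in $\mathbf{2Rel}$~\cite{stayvicary} and of quantum teleportation in $\mathbf{2Hilb}$~\cite{vicary:higherquantumtheory}, and then asserts that both families of solutions can be embedded into $2[\CPs[\cat{FHilb}]]$. You instead propose to construct $\mu$ and $\nu$ directly inside the concrete model of Theorem~\ref{thm:welldefined} and to verify equation~\eqref{eq:teleportation} by hand after unfolding the idempotent splitting of Proposition~\ref{cor:restriction}. Your approach is more self-contained and, if carried out, would actually supply the details the paper defers (in particular it makes explicit the counit/trace-preservation condition on $\mu$ and the completeness of the unitary error basis); the paper's approach buys brevity at the cost of leaving unverified that the embeddings of $\mathbf{2Rel}$- and $\mathbf{2Hilb}$-solutions into $2[\CPs[\cat{FHilb}]]$ are compatible with horizontal composition and hence with the teleportation equation. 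One caveat on your write-up: distinguish the discrete groupoid labelling the nontrivial region --- which by Theorem~\ref{thm:groupoid} is just a set of classical outcomes, whose groupoid composition is the trivial (copying) one --- from the group structure on that set which your $\mu$ and $\nu$ exploit for the key and the unitary error basis; the latter is extra data carried by the 2\-morphisms, not by the region label, so the phrase ``groupoid-multiplies the message by the shared key'' conflates two different multiplications and should be reworded before the routine verification can be trusted.
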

\begin{proof}
We can only give a sketch here. It is already established separately that both classical encrypted communication via a one-time pad~\cite{stayvicary} and quantum teleportation~\cite{vicary:higherquantumtheory} can be characterized exactly as solutions to this equation, in \cat{2Rel} and \cat{2Hilb} respectively. Both families of solutions can be embedded into $2[\CPs[\cat{FHilb}]]$ in an appropriate fashion.
\end{proof}

\noindent
It is an interesting open question whether these are the only solutions, or whether solutions exist which somehow \textit{mix} the encryption and teleportation aspects.

\subsection{Security of teleportation}

In both quantum teleportation and classical encrypted communication with a one-time pad, it is true that if you throw away the second half of the cryptographic resource---the entangled state or the secret key, respectively---all information about the message is lost. An abstract proof of this has already been given in the 2\-categorical setup for the case of encrypted communication~\cite{stayvicary}. We now give a general proof that applies simultaneously to quantum teleportation and encrypted communication.
\begin{theorem}
For any solution to the teleportation equation~\eqref{eq:teleportation}, destroying the second half of the shared resource is equivalent to destroying the original message:
\begin{equation}
\label{eq:teleportationsecurity}
\begin{aligned}
\begin{tikzpicture}[thick, yscale=\ys]
\draw [use as bounding box, draw=none] (0,0) rectangle (2.5,3);
\draw (0.25,0) to [out=up, in=-150] (1,1)
  to [out=-30, in=left, in looseness=1] (2.0,0.3)
  to [out=right, in=down] (2.5,1)
  to (2.5,2.0) node [ground] {};
\draw[fill=\fillClight] (0.25,3) to (0.25,2.5)
  to [out=down, in=150] (1,1)
  to [out=30, in=down] (1.75,2.5) to (1.75,3);
\node at (1,1) [circle, inner sep=0pt, minimum width=0.6cm, draw, fill=white] {$\mu$};
\end{tikzpicture}
\end{aligned}
\qquad=\qquad
\begin{aligned}
\begin{tikzpicture}[thick, yscale=\ys]
\draw [use as bounding box, draw=none] (0,0) rectangle (2.5,3);
\draw[fill=\fillClight](0,3) to (0,2.8)
  to [out=down, in=down, looseness=2] (1.5,2.8)
  to (1.5,3);
\draw (0.75,0) to [out=up, in=down, in looseness=2] (0.75,1) node [ground] {};
\end{tikzpicture}
\end{aligned}
\end{equation}
\end{theorem}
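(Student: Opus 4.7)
The plan is to derive equation~\eqref{eq:teleportationsecurity} by post-composing both sides of the teleportation equation~\eqref{eq:teleportation} with the quantum counit (``ground'') on the top-right output wire, and then simplifying each side independently.

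First I would handle the right-hand side of~\eqref{eq:teleportation}, which is the easy direction. There the top-right quantum output is connected by the straight identity wire to the bottom-left input, so grounding the top-right output is topologically the same as grounding the input, while the classical cap at the top is left untouched. This yields exactly the right-hand side of~\eqref{eq:teleportationsecurity}.

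Next I would handle the left-hand side of~\eqref{eq:teleportation}, where the grounded top-right wire is the quantum factor of $\nu$'s output and the classical factor of $\nu$'s output continues up to the top of the diagram without being grounded. The key move is to use the dagger-unitarity of $\nu$ to delete it from the diagram: as a 2\-morphism whose source and target involve both the classical region and the quantum wire, $\nu$ must intertwine the classical structure action, so it acts as a classically-indexed family of unitaries on the quantum factor. Each such unitary is trace-preserving, so grounding the quantum output of $\nu$ coincides with grounding its quantum input while leaving the classical factor as the identity. After this simplification the remaining diagram consists of the measurement $\mu$ with its classical region extending to the top of the diagram, together with the right half of the shared resource grounded at the former position of $\nu$ --- this is precisely the left-hand side of~\eqref{eq:teleportationsecurity}. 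Combining the two simplifications gives the claim.

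The main obstacle is to make rigorous the passage from dagger-unitarity of $\nu$ to partial trace-preservation on the quantum factor. The cleanest route is through Lemma~\ref{lem:matrices-morphisms} together with Theorem~\ref{thm:welldefined}: a 2\-morphism in the well-defined part of $\Two[\CPs[\cat{FHilb}]]$ is a matrix of completely positive maps, and dagger-unitarity of $\nu$ forces each matrix entry to be a dagger unitary in $\CPs[\cat{FHilb}]$, hence a $*$\-isomorphism of the underlying finite-dimensional C*\-algebras. Since such $*$\-isomorphisms preserve the canonical trace, the required entrywise trace-preservation follows, which is what justifies deleting $\nu$ in the step above.
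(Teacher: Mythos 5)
Your proposal is correct and follows essentially the same route as the paper: adjoin the trace map to both sides of~\eqref{eq:teleportation}, note that the right-hand side simplifies trivially, and use trace-preservation of $\nu$ to absorb the trace on the left-hand side. The only difference is in how trace-preservation is justified --- you argue via Lemma~\ref{lem:matrices-morphisms} that dagger-unitarity forces the entries of $\nu$ to be $*$\-isomorphisms, whereas the paper cites an external result that invertible completely positive maps are trace-preserving; these amount to the same underlying fact.
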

\begin{proof}
Adjoin a trace map to the final system on both sides of the teleportation equation~\eqref{eq:teleportation}. The map $\nu$ is a family of invertible completely positive maps by Lemma~\ref{lem:matrices-morphisms}, and thus is necessarily trace-preserving~\cite[Theorem~3.3]{cariello}; the left-hand side therefore simplifies, giving equation~\eqref{eq:teleportationsecurity}.
\end{proof}

\setlength\bibsep{0pt}
\setlength\bibindent{0pt}
\setlength\bibhang{-10pt}
 \def\bibfont{\footnotesize}
\bibliographystyle{eptcs}
\bibliography{2cp}

\end{document}